\documentclass[12pt,a4paper]{article}

\usepackage[T1]{fontenc}
\usepackage[utf8]{inputenc}
\usepackage{lmodern}

\usepackage{amsmath,amssymb,amsfonts,amsthm}
\usepackage{mathtools}
\usepackage{bm}
\usepackage{physics}

\usepackage{graphicx}
\usepackage{color,xcolor}
\usepackage{caption}
\usepackage{subcaption}
\usepackage{booktabs}
\usepackage{float}
\usepackage{array}

\usepackage{enumitem}

\usepackage[ruled,vlined]{algorithm2e}

\usepackage[colorlinks=true,
            linkcolor=blue,
            citecolor=blue,
            urlcolor=blue]{hyperref}

\usepackage[top=2.5cm,bottom=2.5cm,left=2.5cm,right=2.5cm]{geometry}

\usepackage{setspace}
\usepackage{fancyhdr}
\newtheorem{theorem}{Theorem}
\newtheorem{lemma}{Lemma}

\theoremstyle{definition}

\newtheorem{assumption}{Assumption}

\theoremstyle{remark}
\newtheorem{remark}{Remark}

\begin{document}

\begin{titlepage}
\begin{center}
\vspace*{1.2cm}

{\small\texttt{math.DS} --- Dynamical Systems
\quad|\quad
\textit{cross-list:} \texttt{math.PR}, \texttt{stat.TH}, \texttt{nlin.AO}}\\[0.9cm]

{\LARGE\bfseries Quenched Amplification and Tail Shaping in Networked Systems with Memory and Regime Switching\par}

\vspace{1.2cm}

{\large Mauricio Herrera-Mar\'{\i}n}\\[0.3cm]
{\normalsize Department of Engineering,\\
Universidad del Desarrollo, Chile}\\[0.2cm]
{\normalsize \href{https://orcid.org/0000-0002-9604-3077}{\texttt{ORCID: 0000-0002-9604-3077}}}

\vspace{0.8cm}
{\normalsize \today}

\vspace{1.0cm}

\begin{abstract}
Networked systems operating under intermittent adverse conditions and long memory
can remain stable on average while exhibiting rare but extreme trajectory-level excursions.
We study linear regime-switching network dynamics with Volterra-type memory, formulated
through a finite-dimensional lifted ordinary differential equation embedding.
Despite finite-horizon annealed boundedness, we show that quenched amplification
emerges generically from the interaction of regime persistence, memory accumulation,
and non-normal lifted operator geometry.
A lower bound on burst-size distributions reveals power-law tails whose exponent
is determined by the ratio between unfavorable dwell-time rates and an operator-defined
instantaneous growth parameter.
This parameter is computable online via the Euclidean logarithmic norm of the lifted operator,
yielding a practical early-warning indicator.
Building on this structure, we introduce a dynamic data-driven intervention strategy
that enforces contraction on demand along rare amplification channels,
thereby shaping or truncating tail risk without altering exogenous regime statistics
or typical system behavior.
The results provide a geometrically grounded and operationally actionable framework
for understanding and mitigating extreme events in memory-driven regime-switching systems.
\end{abstract}

\vspace{0.5cm}

\noindent\textbf{Keywords:}
Regime-switching dynamical systems;
Memory effects;
Quenched amplification;
Heavy-tailed transients;
Non-normal dynamics;
Operator-based control

\vspace{0.3cm}

\noindent\textbf{MSC 2020:} 37H15 (primary); 45D05, 60J27, 93E15, 34A08, 93D20 (secondary)

\end{center}
\end{titlepage}

\tableofcontents
\newpage

\section{Introduction}

Large-scale networked dynamical systems—ranging from distributed sensing and information-fusion architectures to cyber--physical infrastructures and coordinated multi-agent platforms—operate far from equilibrium under intermittent stress, incomplete observability, and heterogeneous coupling structures. A recurrent and operationally critical phenomenon in such systems is the emergence of \emph{rare but high-impact cascades}: large trajectory-level excursions (bursts) that are essentially invisible to mean-field diagnostics, stationary variance estimates, or classical asymptotic stability criteria. In many parameter regimes of practical relevance, the system appears stable in aggregate while still producing extreme transient events with non-negligible probability on finite horizons.

Two structural ingredients are repeatedly implicated in this behavior. The first is \emph{memory}, arising from delayed recovery, hysteresis, cumulative fatigue, backlog accumulation, or latent stress. In continuous-time models, such effects are naturally captured by Volterra-type convolution operators and, in particular, by completely monotone or fractional kernels within the resolvent-family framework for evolutionary integral equations \cite{pruss1993evolution,gripenberg1990volterra,diethelm2010fractional}. The second is \emph{stochastic regime switching}, which represents intermittent but structured changes in operating conditions—such as congestion, interference, adversarial pressure, or context-dependent coupling—and is commonly modeled via Markov processes acting on system operators \cite{mao2007stochastic,yuan2010stability}. While each mechanism is well understood in isolation, their \emph{combined} effect on extreme events in networked systems with directional interactions and potential non-normal amplification remains comparatively underdeveloped.

A central difficulty is that most stability theories relevant to switching systems with memory are formulated in \emph{annealed} terms, focusing on expectations, second moments, or asymptotic decay in mean. Such notions do not directly control \emph{quenched} behavior, i.e., the distribution of trajectory-level maxima under fixed realizations of the switching path. The separation between annealed stability and quenched extremes is well established in other areas of random dynamics and heavy-tail theory, where rare multiplicative mechanisms generate power-law tails despite finite averaged quantities. Classical examples include Kesten–Goldie type results for stochastic recursions and their Markov-dependent extensions \cite{kesten1973,goldie1991,buraczewski2016}, as well as heavy-tailed behavior in Markov-modulated linear systems \cite{desaporta2005}. In parallel, the literature on non-normal operators has demonstrated that significant transient growth can arise even when all eigenvalues indicate stability, motivating diagnostics based on symmetric-part growth rates, pseudospectra, and nonmodal analysis \cite{trefethen1993,embree2005,schmid2007}. 

The contribution of this work is to unify these perspectives in a single, explicit operator-theoretic framework for networked systems with memory and stochastic switching, and to identify a concrete mechanism by which \emph{rare regime persistence, long memory, and network amplification geometry} jointly generate quenched tail risk. Specifically, we study linear network dynamics with regime-dependent Volterra memory terms and bounded forcing, and we focus on the burst observable $ B_T = \sup_{t\in[0,T]} \|x(t)\|_2$.

Our analysis reveals a sharp dichotomy. On the one hand, \emph{annealed boundedness} on finite horizons follows from regime-wise resolvent estimates and standard concatenation arguments in Volterra theory. On the other hand, \emph{quenched amplification} can persist due to rare but sufficiently long visits to an unfavorable regime, during which memory accumulates coherently and is efficiently converted into state growth along dynamically amplified network directions. This mechanism yields explicit lower bounds on $\mathbb{P}(B_T>b)$, with a tail exponent that couples the unfavorable dwell-time rate of the Markov chain and an operator-defined effective growth parameter.

To render this mechanism computable and suitable for real-time monitoring, we employ a sum-of-exponentials (SOE) lifting of completely monotone kernels, producing a finite-dimensional Markovian augmentation of the dynamics. This lifted formulation allows us to define two complementary indicators: a latent memory load measuring accumulated internal state, and an instantaneous \emph{operator susceptibility} given by the maximal eigenvalue of the symmetric part of the lifted operator. Crucially, the analysis is carried out entirely within this lifted ODE framework, avoiding any appeal to non-Markovian arguments at the level of trajectories.

Beyond diagnosis, we introduce a Dynamic Data-Driven Applications Systems (DDDAS) intervention principle \cite{darema2009}. Rather than stabilizing the system globally or altering the statistics of the regime process, the model is embedded in a closed measurement–model–decision loop that performs \emph{contraction on demand}. When either latent memory load or operator susceptibility exceeds a prescribed threshold, the system temporarily switches to a provably more contractive operator for a minimum dwell time. This selective intervention truncates rare amplification pathways on finite horizons and, more generally, steepens the effective tail exponent of the burst distribution, while preserving typical behavior and regime statistics.

The paper is organized as follows. Section~2 introduces the lifted switched ODE model, the memory and susceptibility indicators, and the DDDAS policy. Section~3 develops the mathematical core: annealed boundedness on finite horizons, a structural heavy-tail mechanism driven by unfavorable dwell times, and tail truncation or exponent improvement under DDDAS intervention. Section~4 presents numerical experiments on directed networks under periodic forcing, validating the predicted annealed–quenched separation, demonstrating early-warning capability, and quantifying tail-risk reduction relative to memory-free and permanently over-damped baselines.

\section{Mathematical Results: Annealed Stability vs.\ Quenched Tail Risk}
\label{sec:math_results}

\subsection{Lifted switched ODE model and solution concept}
\label{subsec:lifted_model_solution}

Let $(\Omega,\mathcal{F},\mathbb{P})$ support a right-continuous CTMC $z(t)$ with finite state space
$\mathcal{Z}=\{1,\dots,M\}$ and generator $Q=(q_{ij})$. Let $\lambda_i:=-q_{ii}>0$ be the exit rate from regime $i$.
Let $0=T_0<T_1<T_2<\cdots$ denote the jump times of $z(\cdot)$, and set $z_k:=z(T_k)$.

We work on a fixed finite horizon $[0,T]$. The \emph{lifted state} is
\[
X(t)\in\mathbb{R}^d,\qquad d:=(K+1)n,
\]
with block decomposition
\[
X(t)=\big(x(t),y_1(t),\dots,y_K(t)\big)^\top,\qquad x(t),y_k(t)\in\mathbb{R}^n.
\]

\paragraph{Switched lifted dynamics}
For each regime $i\in\mathcal{Z}$ and each control mode $m\in\mathcal{M}:=\{0,1,2\}$ (normal/verify/mitigate),
let $A_i^{(m)}\in\mathbb{R}^{d\times d}$ be a constant matrix. The controlled switched system is
\begin{equation}\label{eq:lifted_switched_ode}
	\dot{X}(t)=A_{z(t)}^{(m(t))}\,X(t)+\widetilde f(t),
	\qquad
	\widetilde f(t):=\big(f(t),0,\dots,0\big)^\top,
	\qquad X(0)=X_0,
\end{equation}
where $f\in L^\infty([0,T];\mathbb{R}^n)$ and $m(\cdot)$ is a right-continuous mode process defined by the DDDAS rule in
Subsection~\ref{subsec:dddas_policy} below.

\paragraph{Nominal lifted operator (SOE-memory structure)}
In normal mode $m=0$, we use the structured lifted matrix
\begin{equation}\label{eq:lifted_structure_nominal}
	A_i^{(0)}=
	\begin{bmatrix}
		B_i & w_{i,1}I & \cdots & w_{i,K}I\\
		I & -r_{i,1}I & \cdots & 0\\
		\vdots & \vdots & \ddots & \vdots\\
		I & 0 & \cdots & -r_{i,K}I
	\end{bmatrix},
	\qquad w_{i,k}>0,\ r_{i,k}>0,
\end{equation}
which is the standard Markovian embedding of a completely monotone memory kernel via a finite positive SOE.
Verify/mitigate modes modify only selected blocks (e.g.\ increased damping in $B_i$, reduced gains $w_{i,k}$, faster memory decay $r_{i,k}$).

\paragraph{Existence and uniqueness}
Since $A_{z(t)}^{(m(t))}$ is piecewise constant in time and $\widetilde f\in L^\infty([0,T])$,
\eqref{eq:lifted_switched_ode} admits a unique absolutely continuous solution on $[0,T]$ for every sample path.

\subsection{Observables, annealed vs.\ quenched notions, and two indicators}
\label{subsec:observables_indicators}

\paragraph{Burst observable}
Define the energy and burst size on $[0,T]$:
\begin{equation}\label{eq:burst_obs}
	E(t):=\|x(t)\|_2,\qquad
	B_T:=\sup_{0\le t\le T}E(t).
\end{equation}
Since $x$ is a block of $X$, we always have $E(t)\le \|X(t)\|_2$.

\paragraph{Annealed boundedness (finite horizon)}
We say the system is \emph{annealed bounded} on $[0,T]$ if $\sup_{t\in[0,T]}\mathbb{E}\,E(t)<\infty$
(and similarly for $\mathbb{E}E(t)^p$ with $p\ge1$).

\paragraph{Quenched tail risk}
Quenched risk is quantified by the tail function $b\mapsto \mathbb{P}(B_T>b)$ and by the moments of $B_T$.

\paragraph{Indicator 1 (latent memory load)}
\begin{equation}\label{eq:indicator_L}
	L(t):=\sum_{k=1}^K \|y_k(t)\|_2.
\end{equation}

\paragraph{Indicator 2 (operator-based susceptibility)}
Define the Euclidean logarithmic norm (matrix measure) of the current lifted operator:
\begin{equation}\label{eq:indicator_S}
	\mathcal{S}(t)
	:=\lambda_{\max}\!\left(\frac{A_{z(t)}^{(m(t))}+(A_{z(t)}^{(m(t))})^\top}{2}\right).
\end{equation}
This is an \emph{instantaneous} growth diagnostic.

\begin{lemma}[Energy inequality controlled by $\mathcal{S}(t)$]
	\label{lem:energy_ineq}
	For almost every $t\in[0,T]$,
	\begin{equation}\label{eq:energy_ineq}
		\frac{d}{dt}\|X(t)\|_2
		\le
		\mathcal{S}(t)\,\|X(t)\|_2+\|\widetilde f(t)\|_2.
	\end{equation}
	In particular, if $\mathcal{S}(t)\le -\sigma<0$ and $\widetilde f\equiv 0$ on $[a,b]$, then
	$\|X(t)\|_2\le e^{-\sigma(t-a)}\|X(a)\|_2$ for all $t\in[a,b]$.
\end{lemma}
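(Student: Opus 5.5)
The plan is to work with the squared norm $V(t):=\|X(t)\|_2^2$. It is absolutely continuous, and it avoids the non-differentiability of $\|X\|_2$ at the origin. On each interval between consecutive switching times of the pair $(z,m)$, the matrix $A(t):=A_{z(t)}^{(m(t))}$ is constant. By the existence statement above, $X$ is absolutely continuous and satisfies \eqref{eq:lifted_switched_ode} for a.e.\ $t$. Hence, for a.e.\ $t$,
\[
\dot V(t)=2\langle X,A(t)X\rangle+2\langle X,\widetilde f(t)\rangle
= 2\Big\langle X,\tfrac{A(t)+A(t)^\top}{2}X\Big\rangle+2\langle X,\widetilde f(t)\rangle .
\]
The Rayleigh quotient bound for the symmetric part gives $\langle X,\tfrac12(A+A^\top)X\rangle\le \mathcal{S}(t)\|X\|_2^2$. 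Cauchy--Schwarz gives $\langle X,\widetilde f\rangle\le\|X\|_2\|\widetilde f\|_2$. Together these yield
\[
\dot V\le 2\mathcal{S}(t)V+2\sqrt V\,\|\widetilde f(t)\|_2 .
\]
The only thing to check here is that on a finite horizon there are finitely many switches almost surely (finite CTMC, and the mode process has a minimum dwell time). Even a countable set of switching times is Lebesgue-null, so the a.e.\ statement is unaffected.

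The main obstacle is passing from $V$ to $\|X\|_2=\sqrt V$, since the square root is not differentiable at $V=0$. I would handle it by the standard $\varepsilon$-regularization. Set $N_\varepsilon(t):=\sqrt{V(t)+\varepsilon^2}$, which is absolutely continuous with $\dot N_\varepsilon=\dot V/(2N_\varepsilon)$. Using $\sqrt V\le N_\varepsilon$ and $V/N_\varepsilon\le N_\varepsilon$, we get
\[
\dot N_\varepsilon\le \mathcal{S}(t)\frac{V}{N_\varepsilon}+\|\widetilde f\|_2 .
\]
When $\mathcal{S}(t)\ge0$ this is at most $\mathcal{S}(t)N_\varepsilon+\|\widetilde f\|_2$. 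When $\mathcal{S}(t)<0$ it is at most $\mathcal{S}(t)N_\varepsilon+|\mathcal{S}(t)|\,\varepsilon^2/N_\varepsilon+\|\widetilde f\|_2$, where the extra term is $O(\varepsilon)$ uniformly, because $\mathcal{S}$ takes finitely many values.

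Where $X(t)\neq0$, $\|X\|_2$ is differentiable a.e., and one can argue directly: $\frac{d}{dt}\|X\|_2=\dot V/(2\|X\|_2)$, which gives \eqref{eq:energy_ineq} pointwise. On the set where $X(t)=0$, at a.e.\ point of that set where $\|X\|_2$ is differentiable, the derivative is $0$ because the function attains its minimum there. Then \eqref{eq:energy_ineq} holds trivially, since the right-hand side is $\|\widetilde f\|_2\ge0$. This direct argument is actually cleaner than the $\varepsilon$-version. I would use it for the pointwise claim and keep the $\varepsilon$-version only for the integrated form.

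For the contraction statement, assume $\mathcal{S}\le-\sigma$ and $\widetilde f\equiv0$ on $[a,b]$. Then $\dot V\le -2\sigma V$ a.e. Since $V$ is absolutely continuous, $\frac{d}{dt}\big(e^{2\sigma(t-a)}V(t)\big)\le0$ a.e., so this product is nonincreasing. This gives $V(t)\le e^{-2\sigma(t-a)}V(a)$, and taking square roots yields $\|X(t)\|_2\le e^{-\sigma(t-a)}\|X(a)\|_2$. This Grönwall step uses $V$ directly and therefore avoids the square-root issue entirely.
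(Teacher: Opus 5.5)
Your proof is correct and follows essentially the paper's argument. On each interval where $(z,m)$ is constant, you bound $\langle X,AX\rangle$ by the Rayleigh quotient of the symmetric part and $\langle X,\widetilde f\rangle$ by Cauchy--Schwarz, then apply Gr\"onwall. Your extra care closes a small rigor gap in the paper's proof, which computes the derivative only where $\|X(t)\|_2\neq 0$: you handle points with $X(t)=0$ and derive the decay from $V=\|X\|_2^2$, which makes the $\varepsilon$-regularization unnecessary.
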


\begin{proof}
	On any interval where $(z(t),m(t))$ is constant, $\dot X=AX+\widetilde f$ with constant $A$.
	For $\|X(t)\|\neq 0$,
	\[
	\frac{d}{dt}\|X\|
	=
	\frac{\langle X,AX\rangle}{\|X\|}+\frac{\langle X,\widetilde f\rangle}{\|X\|}
	\le
	\lambda_{\max}\!\Big(\frac{A+A^\top}{2}\Big)\|X\|+\|\widetilde f\|.
	\]
	This is \eqref{eq:energy_ineq} with $\mathcal{S}(t)=\lambda_{\max}((A+A^\top)/2)$.
	The decay statement follows from Gr\"onwall. Concatenation across finitely many switching/mode intervals yields the a.e.\ inequality on $[0,T]$.
\end{proof}

\subsection{Finite-horizon annealed boundedness under switching}
\label{subsec:annealed_bound}

We emphasize that annealed boundedness on finite horizons is generic for piecewise-constant linear systems with bounded forcing.

\begin{assumption}[Uniform operator norm bound]
	\label{ass:uniform_opnorm}
	There exists $A_\star<\infty$ such that
	\[
	\|A_i^{(m)}\|_2\le A_\star,\qquad \forall\, i\in\mathcal{Z},\ m\in\mathcal{M}.
	\]
\end{assumption}

\begin{theorem}[Deterministic finite-horizon bound and annealed boundedness]
	\label{thm:finite_horizon_bound}
	Assume \ref{ass:uniform_opnorm} and $f\in L^\infty([0,T])$. Then every sample path satisfies
	\begin{equation}\label{eq:pathwise_bound}
		\sup_{0\le t\le T}\|X(t)\|_2
		\le
		\Big(\|X_0\|_2 + T\|\widetilde f\|_{L^\infty([0,T])}\Big)\,e^{A_\star T}.
	\end{equation}
	Consequently,
	\begin{equation}\label{eq:annealed_bound}
		\sup_{0\le t\le T}\mathbb{E}\,\|X(t)\|_2 <\infty
		\qquad\text{and}\qquad
		\sup_{0\le t\le T}\mathbb{E}\,E(t) <\infty.
	\end{equation}
\end{theorem}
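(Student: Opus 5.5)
The plan is a pathwise Grönwall argument on the norm of the lifted state, followed by integration against $\mathbb{P}$ to get the annealed statement. Fix a sample path $\omega$. On $[0,T]$ the CTMC $z(\cdot)$ has almost surely finitely many jumps, and the mode process $m(\cdot)$ is right-continuous and piecewise constant. Hence $[0,T]$ splits into finitely many intervals on which $A(t):=A_{z(t)}^{(m(t))}$ is constant. The absolutely continuous solution from the existence paragraph then satisfies, for every $t$,
\[
X(t)=X_0+\int_0^t A(s)X(s)\,ds+\int_0^t \widetilde f(s)\,ds .
\]
I will work with this integral form rather than differentiating $\|X\|_2$. This avoids the issue at zeros of $X$ that appears in Lemma~\ref{lem:energy_ineq}.

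Taking norms and using Assumption~\ref{ass:uniform_opnorm} gives
\[
\|X(t)\|_2\le \|X_0\|_2+T\|\widetilde f\|_{L^\infty([0,T])}+A_\star\int_0^t\|X(s)\|_2\,ds .
\]
The first two terms do not depend on $t$, so the integral form of Grönwall's inequality (for a continuous nonnegative function with a constant leading term) yields
\[
\|X(t)\|_2\le\big(\|X_0\|_2+T\|\widetilde f\|_{L^\infty}\big)e^{A_\star t}.
\]
Bounding $e^{A_\star t}\le e^{A_\star T}$ and taking the supremum over $t\in[0,T]$ gives \eqref{eq:pathwise_bound}. As a side remark, one could instead use Lemma~\ref{lem:energy_ineq} together with $\mathcal S(t)\le\|A(t)\|_2\le A_\star$, but the integral form is cleaner.

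For the annealed statement, the right-hand side of \eqref{eq:pathwise_bound} is a deterministic constant $C_T$, provided $X_0$ is deterministic; if $X_0$ is random, I would assume $\mathbb{E}\|X_0\|_2<\infty$. In addition, $\|\widetilde f\|_{L^\infty}=\|f\|_{L^\infty}$. I also need $\|X(t)\|_2$ to be a random variable. I would argue measurability by writing $X(t)$ as a finite composition of matrix exponentials and variation-of-constants integrals over the jump times $T_k$, which are random variables. Taking expectations then gives $\sup_t\mathbb{E}\|X(t)\|_2\le C_T$, or $\mathbb{E}\|X_0\|_2$ plus a constant, times $e^{A_\star T}$. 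The statement for $E(t)$ follows from $E(t)\le\|X(t)\|_2$, and the bound for $p$-th moments is obtained by raising the pathwise bound to the power $p$.

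The only delicate point is measurability and the finiteness of the number of switches. The mode process $m(\cdot)$ is defined by the DDDAS feedback rule. Its minimum dwell time guarantees that only finitely many mode switches occur on $[0,T]$, and the mode at each time is a measurable functional of the path up to that time. Notably, the constant in the bound does not depend on the number of switches, because Assumption~\ref{ass:uniform_opnorm} is uniform over all regimes and modes.
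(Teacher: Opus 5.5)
Your proof is correct and follows essentially the paper's route: a pathwise Gr\"onwall bound driven only by the uniform operator-norm constant $A_\star$, followed by taking expectations and using $E(t)\le\|X(t)\|_2$. The only difference is cosmetic. You use the integral form of Gr\"onwall with the forcing already bounded by $T\|\widetilde f\|_{L^\infty}$, which gives the product form directly and also covers $A_\star=0$. The paper instead solves the comparison inequality $u'\le A_\star u+\|\widetilde f\|_\infty$ exactly and then uses $(e^{A_\star t}-1)/A_\star\le t e^{A_\star t}$.
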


\begin{proof}
	For almost every $t$, $\|\dot X(t)\|\le \|A_{z(t)}^{(m(t))}\|\,\|X(t)\|+\|\widetilde f(t)\|\le A_\star \|X(t)\|+\|\widetilde f\|_\infty$.
	Let $u(t):=\|X(t)\|$. Then $u'(t)\le A_\star u(t)+\|\widetilde f\|_\infty$ a.e.
	Gr\"onwall yields
	\[
	u(t)\le u(0)e^{A_\star t}+\int_0^t e^{A_\star(t-s)}\|\widetilde f\|_\infty ds
	= \|X_0\|e^{A_\star t}+\|\widetilde f\|_\infty\frac{e^{A_\star t}-1}{A_\star}.
	\]
	Using $(e^{A_\star t}-1)/A_\star\le t e^{A_\star t}$ gives \eqref{eq:pathwise_bound}.
	Taking expectations and using $E(t)\le \|X(t)\|$ yields \eqref{eq:annealed_bound}.
\end{proof}

\begin{remark}[Annealed boundedness does not control quenched tails]
	The deterministic bound \eqref{eq:pathwise_bound} grows exponentially in $T$ and is insensitive to the probability of rare large bursts:
	it is compatible with heavy-tailed distributions for $B_T$ on fixed horizons, which we now characterize.
\end{remark}

\subsection{Quenched heavy tails from unfavorable dwell times: a cone-condition mechanism in the lifted ODE}
\label{subsec:quenched_tails}

Fix an unfavorable regime $U\in\mathcal{Z}$. We derive a lower bound for the burst tail
$b\mapsto\mathbb{P}(B_T>b)$ within the lifted switched ODE formulation \eqref{eq:lifted_switched_ode},
without invoking eigenvalue instability. The mechanism rests on two ingredients:
(i) exponential dwell times in $U$ and (ii) the existence of an \emph{effective growth channel} during $U$-dwells,
quantified through the Euclidean logarithmic norm (matrix measure) and an alignment (cone) condition.

Throughout this subsection we consider the uncontrolled dynamics $m(t)\equiv 0$ (normal mode).
On any interval where $z(t)\equiv U$, the lifted dynamics are
\begin{equation}\label{eq:U_dwell_dynamics}
	\dot X(t)=A_U^{(0)}X(t)+\widetilde f(t),
	\qquad \widetilde f(t):=(f(t),0,\dots,0)^\top .
\end{equation}

\subsubsection*{Effective growth rate and cone condition}

\paragraph{Matrix-measure growth rate in $U$}
Let
\begin{equation}\label{eq:gammaU_S_def}
	\gamma_U
	:=
	\mu_2\!\big(A_U^{(0)}\big)
	=
	\lambda_{\max}\!\left(\frac{A_U^{(0)}+(A_U^{(0)})^\top}{2}\right)
	=
	\sup_{\|v\|_2=1} v^\top A_U^{(0)} v,
\end{equation}
and fix a unit vector $v_U$ attaining the supremum:
\begin{equation}\label{eq:vU_top_def}
	\|v_U\|_2=1,
	\qquad
	v_U^\top A_U^{(0)} v_U=\gamma_U.
\end{equation}
We emphasize that $\gamma_U>0$ is compatible with $A_U^{(0)}$ being Hurwitz; it reflects instantaneous non-normal/memory-induced
growth directions and coincides with the susceptibility indicator $\mathcal{S}(t)$ when $z(t)=U$ and $m(t)=0$.

\begin{assumption}[Unfavorable instantaneous susceptibility]\label{ass:gammaU_positive_cone}
	\begin{equation}\label{eq:gammaU_pos_cone}
		\gamma_U>0.
	\end{equation}
\end{assumption}

\paragraph{Forcing projection and cone (alignment) condition}
To produce a \emph{lower} bound on $\|X(t)\|_2$ (and hence on bursts), one needs a minimal geometric persistence of the expanding
direction along the amplifying trajectories. We encode this by a cone condition that is natural in non-normal transient growth
and is directly measurable in simulations.

\begin{assumption}[Cone condition on amplifying $U$-dwells]\label{ass:cone_condition}
	There exist $\alpha\in(0,1]$ and $T_0\in(0,T]$ such that, on every sample path segment where an extreme burst is produced,
	there exists a sub-interval $[t_0,t_0+\tau]\subset[0,T_0]$ with $z(t)\equiv U$ and the associated solution satisfies
	\begin{equation}\label{eq:cone}
		v_U^\top X(t)\ \ge\ \alpha\,\|X(t)\|_2
		\qquad \text{for all } t\in[t_0,t_0+\tau].
	\end{equation}
\end{assumption}

\begin{assumption}[Persistent forcing injection along the cone]\label{ass:forcing_in_cone}
	There exists $c_f>0$ such that
	\begin{equation}\label{eq:forcing_cone}
		v_U^\top \widetilde f(t)\ \ge\ c_f
		\qquad \text{for a.e.\ } t\in[0,T_0].
	\end{equation}
\end{assumption}

\begin{remark}[Empirical verification of the cone condition]\label{rem:cone_verification}
	In the numerical section, \eqref{eq:cone} is verified a posteriori on amplifying realizations by monitoring the
	\emph{alignment ratio} $a(t):=(v_U^\top X(t))/\|X(t)\|_2\in[-1,1]$ while $z(t)=U$.
	Condition \eqref{eq:cone} asserts $a(t)\ge \alpha$ on the relevant $U$-windows. Reporting empirical quantiles of
	$\inf_{t\in[t_0,t_0+\tau]} a(t)$ provides a transparent validation of the geometric hypothesis underpinning the tail bound.
\end{remark}

\subsubsection*{Step 1: deterministic exponential lower bound on a single $U$-dwell}

\begin{lemma}[Cone-projected growth inequality]\label{lem:cone_projected_growth}
	Assume \ref{ass:gammaU_positive_cone}, \ref{ass:cone_condition}, and \ref{ass:forcing_in_cone}.
	Suppose that for some $t_0\ge 0$ and $\tau>0$ with $t_0+\tau\le T_0$ we have $z(t)\equiv U$ and $m(t)\equiv 0$ on $[t_0,t_0+\tau]$.
	Define $\zeta(t):=v_U^\top X(t)$. Then for almost every $t\in[t_0,t_0+\tau]$,
	\begin{equation}\label{eq:zeta_cone_ineq}
		\dot\zeta(t)\ \ge\ \gamma_U\,\zeta(t)+c_f,
	\end{equation}
	and hence
	\begin{equation}\label{eq:zeta_cone_solution}
		\zeta(t_0+\tau)\ \ge\ e^{\gamma_U\tau}\,\zeta(t_0)+\frac{c_f}{\gamma_U}\Big(e^{\gamma_U\tau}-1\Big).
	\end{equation}
\end{lemma}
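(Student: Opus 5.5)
The plan is to differentiate $\zeta(t)=v_U^\top X(t)$ along the $U$-dwell, split the drift term into a symmetric and a skew-symmetric part of $A:=A_U^{(0)}$, and use the cone condition \eqref{eq:cone} to control the skew part. The integral bound \eqref{eq:zeta_cone_solution} then follows from a linear comparison (integrating-factor) argument.

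\emph{Step 1 (differentiation).} On $[t_0,t_0+\tau]$ we have $z\equiv U$ and $m\equiv 0$, so $X$ is absolutely continuous and solves \eqref{eq:U_dwell_dynamics}. Hence $\zeta$ is absolutely continuous and, for a.e.\ $t$,
\[
\dot\zeta=v_U^\top A X+v_U^\top\widetilde f .
\]
By Assumption~\ref{ass:forcing_in_cone}, the second term is at least $c_f$.

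\emph{Step 2 (spectral structure of $v_U$).} Write $A=S+K$ with $S=(A+A^\top)/2$ and $K=(A-A^\top)/2$. Since $v_U$ maximizes the Rayleigh quotient of $S$ in \eqref{eq:gammaU_S_def}, it is a top eigenvector: $Sv_U=\gamma_U v_U$. Therefore
\[
v_U^\top A X=(A^\top v_U)^\top X=\gamma_U\zeta-(Kv_U)^\top X .
\]
Decompose $X=\zeta v_U+w$ with $w\perp v_U$. Because $K$ is skew, $(Kv_U)^\top v_U=0$, so the remainder reduces to $-(Kv_U)^\top w$. The cone condition gives $\zeta\ge\alpha\|X\|>0$ and $\|w\|\le\sqrt{1-\alpha^2}\,\|X\|\le \frac{\sqrt{1-\alpha^2}}{\alpha}\zeta$.

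\emph{Main obstacle.} The remainder $-(Kv_U)^\top w$ has no sign in general. It vanishes exactly when $v_U$ is also an eigenvector of $A^\top$, or when $\alpha=1$. Otherwise one only gets
\[
\dot\zeta\ge\big(\gamma_U-\|Kv_U\|\sqrt{1-\alpha^2}/\alpha\big)\zeta+c_f .
\]
My first attempt would be to read the cone hypothesis as also encoding the requirement that the skew coupling does not act against the channel, i.e.\ $(Kv_U)^\top X(t)\le0$ on the window. With that reading \eqref{eq:zeta_cone_ineq} holds exactly as stated. If this is not accepted, I would state the lemma with the effective rate $\widetilde\gamma_U:=\gamma_U-\|Kv_U\|\sqrt{1-\alpha^2}/\alpha$, assuming $\widetilde\gamma_U>0$, and carry $\widetilde\gamma_U$ through the rest of the argument unchanged.

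\emph{Step 3 (comparison).} Set $\phi(t)=e^{-\gamma_U(t-t_0)}\big(\zeta(t)+c_f/\gamma_U\big)$; this is well defined since $\gamma_U>0$ by Assumption~\ref{ass:gammaU_positive_cone}. Then $\phi$ is absolutely continuous and $\dot\phi=e^{-\gamma_U(t-t_0)}(\dot\zeta-\gamma_U\zeta-c_f)\ge0$ a.e. Integrating from $t_0$ to $t_0+\tau$ gives
\[
\zeta(t_0+\tau)+\frac{c_f}{\gamma_U}\ \ge\ e^{\gamma_U\tau}\Big(\zeta(t_0)+\frac{c_f}{\gamma_U}\Big),
\]
which rearranges to \eqref{eq:zeta_cone_solution}.
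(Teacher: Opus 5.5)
Your argument is essentially the paper's own proof. Both differentiate $\zeta$, split $X=\zeta v_U+w$ with $w\perp v_U$, use \eqref{eq:cone} to get $\|w\|_2\le\frac{\sqrt{1-\alpha^2}}{\alpha}\zeta$, absorb the cross term into a reduced rate, and integrate with an exponential factor.

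The obstacle you isolate is genuine, and the paper does not get around it either. It defines $\gamma_U^{\mathrm{cone}}$ in \eqref{eq:gammaU_eff}, proves $\dot\zeta\ge\gamma_U^{\mathrm{cone}}\zeta+c_f$, and then simply sets $\gamma_U:=\gamma_U^{\mathrm{cone}}$. So what is actually established is your option (b). The literal statement with $\gamma_U=\mu_2(A_U^{(0)})$ is false in general. Take $n=1$ and one memory mode, with $A_U^{(0)}=\bigl(\begin{smallmatrix}-0.1&3\\1&-0.1\end{smallmatrix}\bigr)$. Then $\gamma_U=1.9$, $v_U=(1,1)^\top/\sqrt2$, and (in your notation) $Kv_U=(1,-1)^\top/\sqrt2$. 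Near the dominant eigendirection $(\sqrt3,1)^\top$, whose alignment with $v_U$ is about $0.97$, the term $-(Kv_U)^\top X$ is strictly negative. Hence, with $v_U^\top\widetilde f\equiv c_f$, both \eqref{eq:zeta_cone_ineq} and \eqref{eq:zeta_cone_solution} fail on a short window where the cone condition holds with $\alpha=0.95$.

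Where you differ from the paper, you do better. Using $Sv_U=\gamma_U v_U$, you identify the cross term exactly as $-(Kv_U)^\top w$, while the paper bounds it by $\|A_U^{(0)}\|_2\|w\|_2$. Since $A^\top v_U=\gamma_Uv_U-Kv_U$ with $Kv_U\perp v_U$, one has $\|Kv_U\|_2<\|A^\top v_U\|_2\le\|A_U^{(0)}\|_2$ whenever $\gamma_U>0$. So your $\widetilde\gamma_U$ always dominates $\gamma_U^{\mathrm{cone}}$, and it is lossless when $Kv_U=0$. In the example at $\alpha=0.95$ your rate is about $1.57$ versus the paper's $0.91$, against a true rate $\sqrt3-0.1\approx1.63$ along that eigendirection.

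Option (a), however, is not a reading of Assumption~\ref{ass:cone_condition}. That assumption constrains only the angle between $X$ and $v_U$ and says nothing about the sign of $(Kv_U)^\top X$. Option (a) would therefore be a new hypothesis. Drop it and present (b), carrying $\widetilde\gamma_U>0$ into Theorem~\ref{thm:poly_tail_cone} as the paper does.

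Finally, like the paper, you apply \eqref{eq:cone} on the given window $[t_0,t_0+\tau]$. Assumption~\ref{ass:cone_condition} only guarantees some such window on burst-producing paths, so the lemma should state this explicitly.
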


\begin{proof}
	On $[t_0,t_0+\tau]$, $\dot X=A_U^{(0)}X+\widetilde f$. Thus
	\[
	\dot\zeta(t)=v_U^\top A_U^{(0)}X(t)+v_U^\top\widetilde f(t).
	\]
	Write $X(t)=\zeta(t)v_U+w(t)$ with $w(t)\perp v_U$. Then
	\[
	v_U^\top A_U^{(0)}X(t)
	=\zeta(t)\,v_U^\top A_U^{(0)}v_U + v_U^\top A_U^{(0)}w(t)
	=\gamma_U\,\zeta(t) + v_U^\top A_U^{(0)}w(t),
	\]
	using \eqref{eq:vU_top_def}. The cone condition \eqref{eq:cone} implies
	$\zeta(t)=v_U^\top X(t)\ge \alpha\|X(t)\|_2$, hence $\|w(t)\|_2^2=\|X(t)\|_2^2-\zeta(t)^2\le (1-\alpha^2)\|X(t)\|_2^2$ and therefore
	\begin{equation}\label{eq:w_bound}
		\|w(t)\|_2 \le \sqrt{1-\alpha^2}\,\|X(t)\|_2 \le \frac{\sqrt{1-\alpha^2}}{\alpha}\,\zeta(t)
		\qquad \text{for all } t\in[t_0,t_0+\tau].
	\end{equation}
	Consequently,
	\[
	v_U^\top A_U^{(0)}w(t)\ \ge\ -\|v_U^\top A_U^{(0)}\|_2\,\|w(t)\|_2
	\ge
	-\|A_U^{(0)}\|_2\,\frac{\sqrt{1-\alpha^2}}{\alpha}\,\zeta(t).
	\]
	Define the effective cone growth rate
	\begin{equation}\label{eq:gammaU_eff}
		\gamma_U^{\mathrm{cone}}
		:=
		\gamma_U - \|A_U^{(0)}\|_2\,\frac{\sqrt{1-\alpha^2}}{\alpha}.
	\end{equation}
	Then
	\[
	v_U^\top A_U^{(0)}X(t)
	\ge
	\gamma_U^{\mathrm{cone}}\,\zeta(t).
	\]
	Combining with \eqref{eq:forcing_cone} gives for a.e.\ $t$:
	\[
	\dot\zeta(t)\ge \gamma_U^{\mathrm{cone}}\zeta(t)+c_f.
	\]
	If $\gamma_U^{\mathrm{cone}}>0$, set $\gamma_U:=\gamma_U^{\mathrm{cone}}$ (redefining the effective growth rate on the cone),
	and the differential inequality becomes \eqref{eq:zeta_cone_ineq}. Solving it with the integrating factor
	$e^{-\gamma_U t}$ yields \eqref{eq:zeta_cone_solution}.
\end{proof}

\begin{remark}[Operational definition of the effective growth rate]\label{rem:effective_gammaU}
	The proof shows that the tail mechanism is controlled by the \emph{effective} cone growth rate
	$\gamma_U^{\mathrm{cone}}$ in \eqref{eq:gammaU_eff}, which depends on both the susceptibility
	$\mu_2(A_U^{(0)})$ and the alignment level $\alpha$.
	In practice, we estimate $\gamma_U^{\mathrm{cone}}$ empirically on $U$-windows by regressing the growth of $\log\|X(t)\|_2$
	(or $\log\zeta(t)$) conditional on $z(t)=U$ and $a(t)\ge \alpha$, and report it alongside $\mu_2(A_U^{(0)})$.
\end{remark}

\begin{lemma}[Norm lower bound on a single $U$-dwell]\label{lem:norm_lower_cone}
	Assume the conditions of Lemma~\ref{lem:cone_projected_growth}. Then
	\begin{equation}\label{eq:norm_lower_cone}
		\|X(t_0+\tau)\|_2 \ \ge\ \zeta(t_0+\tau)
		\ \ge\ e^{\gamma_U\tau}\,\zeta(t_0)+\frac{c_f}{\gamma_U}\Big(e^{\gamma_U\tau}-1\Big).
	\end{equation}
	In particular, there exist constants $c_1,c_2>0$ (depending only on $X(t_0),c_f,\gamma_U$) such that
	\begin{equation}\label{eq:norm_lower_cone_simplified}
		\|X(t_0+\tau)\|_2 \ge c_1 e^{\gamma_U\tau}-c_2.
	\end{equation}
\end{lemma}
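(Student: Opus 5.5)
The first inequality in \eqref{eq:norm_lower_cone} is Cauchy--Schwarz. Since $\|v_U\|_2=1$, we have $\zeta(t)=v_U^\top X(t)\le |v_U^\top X(t)|\le \|v_U\|_2\|X(t)\|_2=\|X(t)\|_2$ for every $t$, and in particular at $t=t_0+\tau$. The second inequality is exactly \eqref{eq:zeta_cone_solution} from Lemma~\ref{lem:cone_projected_growth}. That lemma applies verbatim because we assume its hypotheses, with $\gamma_U$ understood as the effective (cone) rate fixed there and taken positive. Chaining the two gives \eqref{eq:norm_lower_cone}.

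For the simplified form \eqref{eq:norm_lower_cone_simplified}, we regroup the right-hand side of \eqref{eq:zeta_cone_solution} as
\[
e^{\gamma_U\tau}\Big(\zeta(t_0)+\frac{c_f}{\gamma_U}\Big)-\frac{c_f}{\gamma_U}.
\]
We then set $c_1:=\zeta(t_0)+c_f/\gamma_U$ and $c_2:=c_f/\gamma_U$, both depending only on $X(t_0)$, $c_f$ and $\gamma_U$. Clearly $c_2>0$, because $c_f>0$ by Assumption~\ref{ass:forcing_in_cone} and $\gamma_U>0$. For $c_1>0$ we use the cone condition \eqref{eq:cone} at $t=t_0$, which gives $\zeta(t_0)\ge\alpha\|X(t_0)\|_2\ge 0$. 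Hence $c_1\ge c_f/\gamma_U>0$.

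The only point needing care is this positivity of $c_1$, which depends on the sign of $\zeta(t_0)$. The cone condition supplies it, since \eqref{eq:cone} holds on the closed interval including $t_0$. If one preferred not to use the cone at the left endpoint, one could still take $c_1:=c_f/\gamma_U$ when $\zeta(t_0)\ge0$ by dropping that term. Because the cone holds at $t_0$, this case distinction is unnecessary, and the constants above give \eqref{eq:norm_lower_cone_simplified} directly.
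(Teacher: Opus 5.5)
Your proof is correct. The first inequality (Cauchy--Schwarz with $\|v_U\|_2=1$) and its chaining with \eqref{eq:zeta_cone_solution} are exactly what the paper does.

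For the simplified bound \eqref{eq:norm_lower_cone_simplified} your route differs slightly. The paper ``absorbs $\zeta(t_0)$ into constants'' and uses $e^{\gamma_U\tau}-1\ge\tfrac12 e^{\gamma_U\tau}$, which holds only for $\tau\ge(\log 2)/\gamma_U$. It also never addresses why the resulting $c_1$ is positive, which depends on the sign of $\zeta(t_0)$. You instead regroup the right-hand side exactly as $e^{\gamma_U\tau}\bigl(\zeta(t_0)+c_f/\gamma_U\bigr)-c_f/\gamma_U$. This gives explicit constants and makes the bound valid for every admissible $\tau$, with no largeness threshold. You also secure $c_1>0$ from the cone condition at $t_0$. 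That is legitimate, since the proof of Lemma~\ref{lem:cone_projected_growth} already uses \eqref{eq:cone} on the whole closed interval $[t_0,t_0+\tau]$. Your form is moreover the one Theorem~\ref{thm:poly_tail_cone} actually relies on, where $\tau_b$ is defined through $c_1e^{\gamma_U\tau_b}-c_2=b$ with a genuine $c_2>0$.

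The fallback remark in your last paragraph is redundant, since the case $\zeta(t_0)\ge 0$ is precisely what the cone supplies, but it does no harm.
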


\begin{proof}
	Since $\|v_U\|_2=1$, we have $\|X(t)\|_2\ge v_U^\top X(t)=\zeta(t)$.
	Combine with \eqref{eq:zeta_cone_solution} to obtain \eqref{eq:norm_lower_cone}.
	The simplified bound \eqref{eq:norm_lower_cone_simplified} follows by absorbing $\zeta(t_0)$ into constants and using
	$e^{\gamma_U\tau}-1\ge \tfrac12 e^{\gamma_U\tau}$ for $\tau$ large enough.
\end{proof}

\subsubsection*{Step 2: polynomial CCDF lower bound from exponential dwell-time tails}

\begin{theorem}[Polynomial tail lower bound]\label{thm:poly_tail_cone}
	Assume \ref{ass:gammaU_positive_cone}, \ref{ass:cone_condition}, and \ref{ass:forcing_in_cone}, and assume that the corresponding
	effective cone growth rate in Lemma~\ref{lem:cone_projected_growth} satisfies $\gamma_U>0$.
	Assume that when the CTMC enters $U$, the dwell time $\tau^{(U)}\sim\mathrm{Exp}(\lambda_U)$, and $\mathbb{P}(z(0)=U)>0$.
	Then there exist constants $b_0>0$ and $C_->0$ such that for all $b\ge b_0$ satisfying
	\begin{equation}\label{eq:log_condition_cone}
		\frac{1}{\gamma_U}\log b \le T_0,
	\end{equation}
	we have
	\begin{equation}\label{eq:tail_lower_cone}
		\mathbb{P}(B_T>b)\ \ge\ C_-\,b^{-\lambda_U/\gamma_U}.
	\end{equation}
\end{theorem}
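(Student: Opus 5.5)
The plan is to reduce the burst event to a long initial dwell in $U$, and then use the exponential law of that dwell.

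\textbf{Inclusion of events.} Work on the event $\{z(0)=U\}$ and set $\tau_1:=T_1$, the first jump time. Conditional on $z(0)=U$, we have $\tau_1\sim\mathrm{Exp}(\lambda_U)$. On $[0,\min(\tau_1,T_0)]$ we have $z\equiv U$ and $m\equiv 0$. Taking $t_0=0$, Lemma~\ref{lem:norm_lower_cone} gives, for every $\tau\le\min(\tau_1,T_0)$,
\[
\|X(\tau)\|_2\ \ge\ e^{\gamma_U\tau}\zeta(0)+\tfrac{c_f}{\gamma_U}(e^{\gamma_U\tau}-1).
\]
The cone condition (Assumption~\ref{ass:cone_condition}) is invoked along this window, in the same way Lemma~\ref{lem:cone_projected_growth} uses it. We also have $\zeta(0)\ge \alpha\|X_0\|_2\ge0$, so the bound simplifies to $\|X(\tau)\|_2\ge \tfrac{c_f}{\gamma_U}(e^{\gamma_U\tau}-1)$.

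\textbf{Passing from $X$ to $x$.} This is the step I expect to be the main obstacle. $B_T$ is the supremum of $\|x\|_2$, not of $\|X\|_2$. The lower bound on $\|X\|$ has to be transferred to $x$, and there are two routes.

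The first route uses the structure of \eqref{eq:lifted_structure_nominal}. Each memory block solves $\dot y_k=x-r_{i,k}y_k$, so $\|y_k(t)\|\le e^{-r t}\|y_k(0)\|+r^{-1}\sup_{s\le t}\|x(s)\|$. This yields
\[
\|X(t)\|_2\le C_0\big(1+\sup_{s\le t}\|x(s)\|_2\big)
\]
with $C_0$ depending on $K$, $\min r_{U,k}$ and $X_0$. Hence $B_T\ge C_0^{-1}\|X(\tau)\|_2-1$.

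The second route is fallback only. If one prefers to avoid the block argument, one can interpret the burst through $\|X\|$, which is what the cone argument naturally controls. I would take the first route, since it stays entirely within the lifted ODE as the paper insists.

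\textbf{Choosing the dwell time and computing the probability.} Combining the two steps gives $B_T\ge c_1e^{\gamma_U\tau}-c_2$ for suitable constants. Choose
\[
\tau_b:=\gamma_U^{-1}\log\big((b+c_2)/c_1\big).
\]
For $b\ge b_0$ large we have $\tau_b\le \gamma_U^{-1}\log b+\kappa$ for a fixed constant $\kappa$. Condition \eqref{eq:log_condition_cone} ensures $\tau_b\le T_0$, possibly after enlarging $b_0$ or absorbing the constant $\kappa$. One should check, or slightly strengthen \eqref{eq:log_condition_cone}, so that $\tau_b\le T_0\le T$ holds exactly; this bookkeeping is minor. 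Then
\[
\{z(0)=U,\ \tau_1>\tau_b\}\subset\{B_T>b\}.
\]
Therefore
\[
\mathbb{P}(B_T>b)\ge \mathbb{P}(z(0)=U)\,e^{-\lambda_U\tau_b}\ge \mathbb{P}(z(0)=U)\,e^{-\lambda_U\kappa}\,b^{-\lambda_U/\gamma_U}.
\]
This gives \eqref{eq:tail_lower_cone} with $C_-:=\mathbb{P}(z(0)=U)e^{-\lambda_U\kappa}>0$.
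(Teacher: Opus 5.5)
Your proof follows the paper's route. You use the same event $\{z(0)=U\}\cap\{\tau^{(U)}\ge\tau_b\}$, the same choice $\tau_b=\gamma_U^{-1}\log((b+c_2)/c_1)$ with $t_0=0$ in Lemma~\ref{lem:norm_lower_cone}, and the same exponential-tail computation. You also read Assumption~\ref{ass:cone_condition} the same way the paper does, as giving \eqref{eq:cone} on all of $[0,\tau_b]$ for every path in that event.

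The substantive difference is your transfer step from $\|X\|_2$ to $B_T=\sup_t\|x(t)\|_2$. The paper passes directly from $\|X(\tau_b)\|_2\ge b$ to $B_T\ge b$, which does not follow, since only $E(t)\le\|X(t)\|_2$ is available. Your estimate $\|y_k(t)\|_2\le\|y_k(0)\|_2+r_{U,k}^{-1}\sup_{s\le t}\|x(s)\|_2$, obtained from $\dot y_k=x-r_{U,k}y_k$ on the $U$-window, is exactly what closes this step. The price is that $c_1,c_2$ also depend on $X_0$ and $\min_k r_{U,k}$.

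On the bookkeeping you flag, you are right that \eqref{eq:log_condition_cone} alone does not give $\tau_b\le T_0$: whenever $c_1\le 1$ one has $\tau_b>\gamma_U^{-1}\log b$. The paper's assertion that it does has the same defect. However, enlarging $b_0$ cannot repair this, because the failure sits at the upper end $b\approx e^{\gamma_U T_0}$ of the admissible range; pushing $b_0$ past that point only makes the statement vacuous. The real fix is the other option you mention: require $\gamma_U^{-1}\log b\le T_0-\kappa$, with your constant $\kappa$, after which your argument goes through.
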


\begin{proof}
	Let $c_1,c_2$ be from Lemma~\ref{lem:norm_lower_cone} and define
	\[
	\tau_b:=\frac{1}{\gamma_U}\log\!\Big(\frac{b+c_2}{c_1}\Big).
	\]
	Then $c_1 e^{\gamma_U\tau_b}-c_2=b$. Consider the event
	\[
	\mathcal{E}_b:=\{z(0)=U\}\cap\{\tau^{(U)}\ge \tau_b\}.
	\]
	On $\mathcal{E}_b$, the chain remains in $U$ on $[0,\tau_b]$ and, since the policy starts in normal mode, $m(t)\equiv 0$ there.
	Moreover, by Assumption~\ref{ass:cone_condition}, along the amplifying segment we may choose $t_0=0$ and $\tau=\tau_b$
	(with $\tau_b\le T_0$ ensured by \eqref{eq:log_condition_cone}), so Lemma~\ref{lem:norm_lower_cone} yields
	$\|X(\tau_b)\|_2\ge b$, hence $B_T\ge b$. Therefore,
	\[
	\mathbb{P}(B_T>b)\ge \mathbb{P}(\mathcal{E}_b)
	=
	\mathbb{P}(z(0)=U)\,\mathbb{P}(\tau^{(U)}\ge \tau_b\mid z(0)=U)
	=
	\mathbb{P}(z(0)=U)\,e^{-\lambda_U\tau_b}.
	\]
	Using the definition of $\tau_b$,
	\[
	e^{-\lambda_U\tau_b}
	=
	\left(\frac{c_1}{b+c_2}\right)^{\lambda_U/\gamma_U}.
	\]
	For $b$ large enough, $(b+c_2)^{-\lambda_U/\gamma_U}\ge 2^{-\lambda_U/\gamma_U}b^{-\lambda_U/\gamma_U}$, yielding
	\eqref{eq:tail_lower_cone} with $C_-:=\mathbb{P}(z(0)=U)\,c_1^{\lambda_U/\gamma_U}\,2^{-\lambda_U/\gamma_U}$.
\end{proof}

\begin{remark}[Interpretation and link to CCDF slopes]\label{rem:tail_index_cone}
	The exponent $\lambda_U/\gamma_U$ acts as a structural tail index: $\lambda_U$ quantifies the persistence of adverse conditions
	and $\gamma_U$ is the effective lifted growth rate along amplifying channels in $U$.
	In the numerical section, $\lambda_U$ is estimated from dwell-time statistics, while $\gamma_U$ is estimated from
	$\mathcal{S}(t)$ and/or direct regressions of $\log\|X(t)\|_2$ restricted to $z(t)=U$ and high alignment.
	The empirical CCDF slope is then compared with the theoretical index $\lambda_U/\gamma_U$.
\end{remark}

\subsection{DDDAS policy with hysteresis and rigorous tail truncation}
\label{subsec:dddas_policy}

We now define the DDDAS mode process $m(t)$ and prove a finite-horizon truncation result under strict contraction in mitigate mode.

\paragraph{Thresholds and hysteresis}
Fix $\tau_L^{(1)}<\tau_L^{(2)}$ and $\tau_S^{(1)}<\tau_S^{(2)}$, and a minimum dwell time $\Delta_{\min}>0$.
Let $t_{\mathrm{last}}$ denote the most recent time at which the mode changed. The mode $m(\cdot)$ is right-continuous and may change
only at times $t$ such that $t-t_{\mathrm{last}}\ge \Delta_{\min}$, according to:
\[
m(t^+)=
\begin{cases}
	2, & \text{if } L(t)>\tau_L^{(2)}\ \text{or}\ \mathcal{S}(t)>\tau_S^{(2)},\\[2pt]
	1, & \text{else if } L(t)>\tau_L^{(1)}\ \text{or}\ \mathcal{S}(t)>\tau_S^{(1)},\\[2pt]
	0, & \text{otherwise},
\end{cases}
\]
together with the monotone release rule: $m$ may decrease only when both indicators fall below the corresponding lower thresholds,
ensuring genuine hysteresis (no chattering).

\begin{assumption}[Strict contraction in mitigate mode]
	\label{ass:strict_contraction}
	There exist $M_c\ge 1$ and $\kappa>0$ such that for the unfavorable regime $U$,
	\begin{equation}\label{eq:mitigate_contraction}
		\|e^{A_U^{(2)} t}\|_2\le M_c e^{-\kappa t},\qquad \forall t\ge 0,
	\end{equation}
	and the enforced minimum dwell satisfies
	\begin{equation}\label{eq:rho_condition}
		\rho:=M_c e^{-\kappa \Delta_{\min}}<1.
	\end{equation}
\end{assumption}

\begin{theorem}[Pathwise finite-horizon bound and tail truncation under DDDAS]
	\label{thm:dddas_trunc}
	Assume \ref{ass:uniform_opnorm} and \ref{ass:strict_contraction}, and $f\in L^\infty([0,T])$.
	Then there exists a deterministic constant $M_T<\infty$ such that every sample path satisfies
	\[
	\sup_{0\le t\le T}\|X(t)\|_2\le M_T,
	\qquad\text{hence}\qquad
	\mathbb{P}(B_T>b)=0\ \text{for all }b>M_T.
	\]
\end{theorem}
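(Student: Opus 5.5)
The statement follows directly from Theorem~\ref{thm:finite_horizon_bound}, because the DDDAS mode process $m(\cdot)$ only selects among the finitely many matrices $A_i^{(m)}$. Assumption~\ref{ass:uniform_opnorm} bounds all of these uniformly by $A_\star$, whatever the mode. I would therefore first check that the controlled system \eqref{eq:lifted_switched_ode} is still piecewise constant on $[0,T]$ with finitely many switches. Mode changes are separated by at least $\Delta_{\min}$, so there are at most $\lceil T/\Delta_{\min}\rceil$ of them. The CTMC has finitely many jumps on $[0,T]$ almost surely. Hence the existence and uniqueness discussion applies pathwise. On each constancy interval we have $\|\dot X\|\le A_\star\|X\|+\|\widetilde f\|_\infty$ a.e., and the Gr\"onwall argument from the proof of Theorem~\ref{thm:finite_horizon_bound} gives, for every sample path,
\[
\sup_{0\le t\le T}\|X(t)\|_2\le M_T:=\bigl(\|X_0\|_2+T\|\widetilde f\|_{L^\infty([0,T])}\bigr)e^{A_\star T}.
\]
This constant depends only on $X_0$, $f$, $T$ and $A_\star$, so it is deterministic. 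Since $E(t)\le\|X(t)\|_2$, we get $B_T\le M_T$ surely, and therefore $\mathbb{P}(B_T>b)=0$ for $b>M_T$.

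As stated, then, Assumption~\ref{ass:strict_contraction} is not needed for the conclusion; it matters only for a sharper constant. I would nevertheless prove a refined bound that shows how mitigation improves $M_T$, since that is the substance of the truncation claim. I would partition $[0,T]$ into the mode-2 intervals in regime $U$ and the rest. On a mitigate interval of length $\ge\Delta_{\min}$ with $z\equiv U$, variation of constants and \eqref{eq:mitigate_contraction} give
\[
\|X(t)\|\le M_c e^{-\kappa(t-a)}\|X(a)\|+\frac{M_c}{\kappa}\|\widetilde f\|_\infty ,
\]
so each complete mitigation dwell multiplies the homogeneous part by at most $\rho<1$ from \eqref{eq:rho_condition}. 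Elsewhere the crude rate $e^{A_\star s}$ applies. The entry condition of the hysteresis rule bounds $L$ and $\mathcal{S}$ before mitigation is triggered. Together these let one iterate a recursion of the form $u_{k+1}\le\rho\,u_k+C$ across successive mitigation episodes. That yields a bound which stays controlled by the thresholds and by $C/(1-\rho)$, rather than growing as $e^{A_\star T}$.

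The main obstacle is this refined recursion. It requires that mitigation dwells are not interrupted by a regime change out of $U$, where $A_i^{(2)}$ for $i\neq U$ carries no contraction hypothesis. It also requires that the growth between triggers is controlled; the indicator $L$ sees only the $y$-blocks, so the $x$-block growth needs separate handling. I would first try to dispatch the interruption issue by treating any interrupted dwell with the uniform bound $e^{A_\star\Delta_{\min}}$. Because the number of episodes on $[0,T]$ is finite and deterministic, this still gives a deterministic $M_T$. This fallback is exactly the unconditional bound above, which is what guarantees the theorem as stated.
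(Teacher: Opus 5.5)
Your proof is correct, and it is more economical than the paper's. You notice that Theorem~\ref{thm:finite_horizon_bound} is already stated for the controlled system \eqref{eq:lifted_switched_ode} with an arbitrary right-continuous mode process. Since Assumption~\ref{ass:uniform_opnorm} covers every $A_i^{(m)}$, the Gr\"onwall bound $(\|X_0\|_2+T\|\widetilde f\|_\infty)e^{A_\star T}$ holds on every path under DDDAS, and Assumption~\ref{ass:strict_contraction} is never used.

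The paper argues differently:
\begin{itemize}
\item On each window with $z\equiv U$, $m\equiv 2$, it uses variation of constants to get the affine contraction step \eqref{eq:mitigate_step}.
\item Between such windows it applies the $A_\star$-Gr\"onwall bound.
\item It counts at most $N_{\max}=\lceil T/\Delta_{\min}\rceil$ windows and iterates a ``grow by $e^{A_\star T}$, then contract by $\rho$'' recursion.
\end{itemize}
The between-window step already rests on Assumption~\ref{ass:uniform_opnorm}, so the paper's bound is finite for the same reason yours is. Because the paper charges a full factor $e^{A_\star T}$ to each of up to $N_{\max}$ gaps, its constants $C_T,D_T$ are no sharper than your $M_T$. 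Taken literally, they are larger whenever $\rho e^{A_\star T}>1$.

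What the paper's route buys is a display of the affine recursion through which contraction could enter a finer estimate. What yours buys is minimal hypotheses, an explicit constant, and the accurate observation that the stated truncation follows from Assumption~\ref{ass:uniform_opnorm} on a finite horizon. It is not an effect of the DDDAS policy.

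Your refined paragraph claims a bound controlled by the thresholds and $C/(1-\rho)$ rather than by $e^{A_\star T}$. That claim is not established, for the reasons you list yourself:
\begin{itemize}
\item $L$ does not see the $x$-block.
\item A trigger can be delayed by up to $\Delta_{\min}$ after a threshold crossing.
\item Mitigation windows during which $z$ leaves $U$ carry no contraction hypothesis.
\end{itemize}
Since you explicitly fall back on the unconditional bound, this does not affect your proof of the statement, and the paper does not prove such a refinement either.
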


\begin{proof}
	Fix a sample path. Each time the policy enters mitigate mode while $z(t)=U$, it remains there for at least $\Delta_{\min}$.
	Consider such an interval $[a,a+\Delta]$ with $\Delta\ge \Delta_{\min}$ and $z(t)=U$, $m(t)=2$ throughout.
	Variation-of-constants and \eqref{eq:mitigate_contraction} give
	\[
	\|X(a+\Delta)\|
	\le
	M_c e^{-\kappa \Delta}\|X(a)\|
	+
	\int_a^{a+\Delta} M_c e^{-\kappa(a+\Delta-s)}\|\widetilde f(s)\|\,ds
	\le
	M_c e^{-\kappa \Delta}\|X(a)\|
	+
	\frac{M_c}{\kappa}\|\widetilde f\|_\infty.
	\]
	Using $\Delta\ge \Delta_{\min}$, we obtain
	\begin{equation}\label{eq:mitigate_step}
		\|X(a+\Delta)\| \le \rho\,\|X(a)\|+\frac{M_c}{\kappa}\|\widetilde f\|_\infty,
		\qquad \rho<1.
	\end{equation}
	
	Between mitigation intervals, $\|A_{z(t)}^{(m(t))}\|\le A_\star$ by Assumption~\ref{ass:uniform_opnorm}, hence by the same Gr\"onwall argument
	as in Theorem~\ref{thm:finite_horizon_bound}, over any interval of length at most $T$ the mapping from the interval start value to the end value
	is affine with deterministic multiplicative factor at most $e^{A_\star T}$ and additive factor at most $T e^{A_\star T}\|\widetilde f\|_\infty$.
	
	On $[0,T]$, there can be at most $N_{\max}:=\lceil T/\Delta_{\min}\rceil$ mitigation intervals, due to the enforced minimum dwell.
	Iterating the affine growth bounds between mitigations and the strict contraction step \eqref{eq:mitigate_step}, we obtain a deterministic bound
	for the maximum possible value of $\|X(t)\|$ over $[0,T]$. Concretely, the worst-case sequence is:
	grow for at most $T$ under factor $e^{A_\star T}$, then contract by $\rho$, repeating at most $N_{\max}$ times. This yields
	\[
	\sup_{t\le T}\|X(t)\|
	\le
	C_T\|X_0\| + D_T\|\widetilde f\|_\infty
	\]
	for explicit finite constants $C_T,D_T$ depending only on $(A_\star,\rho,\kappa,M_c,T,\Delta_{\min})$.
	Set $M_T:=C_T\|X_0\|+D_T\|\widetilde f\|_\infty$. Since $E(t)\le\|X(t)\|$, the same bound holds for $B_T$.
\end{proof}

\begin{remark}[From truncation to exponent improvement]
	If mitigate mode does not render $A_U^{(2)}$ strictly contractive but reduces the growth rate from $\gamma_U$ to
	$\gamma_U^{\mathrm{eff}}\in(0,\gamma_U)$ on unfavorable dwells, then the proof of Theorem~\ref{thm:poly_tail_cone} applies with $\gamma_U^{\mathrm{eff}}$,
	giving a larger tail exponent $\lambda_U/\gamma_U^{\mathrm{eff}}$ (lighter tails). This is precisely what the CCDF slope
	diagnostics will quantify in the numerical section.
\end{remark}

\section{Model and Indicators}

\subsection{Lifted switched ODE model}
\label{subsec:lifted_model}

Let $(\Omega,\mathcal F,\mathbb P)$ support a right-continuous continuous-time
Markov chain $z(t)$ with finite state space $\mathcal Z=\{S,U\}$ and generator
$Q=(q_{ij})$. The dwell time in regime $i$ is exponentially distributed with
rate $\lambda_i=-q_{ii}$.

We consider the lifted state
\[
X(t) = \begin{bmatrix} x(t) & y_1(t) & \cdots & y_K(t) \end{bmatrix}^\top
\in \mathbb R^{(K+1)n},
\]
evolving according to the switched linear ODE
\begin{equation}
	\label{eq:lifted_ode}
	\dot X(t)
	=
	A_{z(t)}^{(m(t))} X(t) + \tilde f(t),
	\qquad
	\tilde f(t) = \begin{bmatrix} f(t) & 0 & \cdots & 0 \end{bmatrix}^\top ,
\end{equation}
where $m(t)\in\{0,1,2\}$ denotes the DDDAS mode (normal, verify, mitigate).

For each regime $i\in\mathcal Z$ and mode $m$, the lifted operator has block
structure
\begin{equation}
	\label{eq:lifted_structure}
	A_i^{(m)}
	=
	\begin{bmatrix}
		B_i^{(m)} & W_{i,1}^{(m)} & \cdots & W_{i,K}^{(m)} \\
		I & -r_{i,1}^{(m)} I & \cdots & 0 \\
		\vdots & \vdots & \ddots & \vdots \\
		I & 0 & \cdots & -r_{i,K}^{(m)} I
	\end{bmatrix}.
\end{equation}

The matrices $B_i^{(m)}$ encode instantaneous network coupling and damping,
while the lower blocks represent memory accumulation and decay.

\subsection{Indicators: memory load and spectral susceptibility}
\label{subsec:indicators}

We introduce two indicators defined purely on the lifted ODE.

\paragraph{Memory load}
\begin{equation}
	\label{eq:L_def}
	L(t) := \sum_{k=1}^K \|y_k(t)\|_2 ,
\end{equation}
which measures accumulated latent memory.

\paragraph{Spectral susceptibility}
\begin{equation}
	\label{eq:S_def}
	\mathcal S(t)
	:=
	\lambda_{\max}\!\left(
	\frac{A_{z(t)}^{(m(t))} + (A_{z(t)}^{(m(t))})^\top}{2}
	\right).
\end{equation}

The quantity $\mathcal S(t)$ captures instantaneous Euclidean growth directions
of the lifted dynamics and acts as an operator-level early-warning signal.

\paragraph{DDDAS policy}

The DDDAS controller selects the mode $m(t)\in\{0,1,2\}$ according to thresholds
on $(L(t),\mathcal S(t))$ with hysteresis:
\[
\begin{cases}
	m(t)=1 & \text{if } L(t)>\tau_L^{(1)} \text{ or } \mathcal S(t)>\tau_S^{(1)},\\
	m(t)=2 & \text{if } L(t)>\tau_L^{(2)} \text{ or } \mathcal S(t)>\tau_S^{(2)}.
\end{cases}
\]
Mode switching enforces a minimum dwell time to avoid chattering.
\subsection{Design of $A_i^{(1)}$ and $A_i^{(2)}$}
\label{subsec:design_A}

The verify and mitigate operators are constructed by modifying specific blocks
of \eqref{eq:lifted_structure} in a physically interpretable manner.

\paragraph{Verify mode ($m=1$)}
In verify mode we mildly reduce memory amplification without altering the
network topology:
\[
W_{i,k}^{(1)} = \alpha W_{i,k}^{(0)}, \qquad
r_{i,k}^{(1)} = r_{i,k}^{(0)}, \qquad 0<\alpha<1,
\]
while keeping $B_i^{(1)}=B_i^{(0)}$.
This reduces memory loading while preserving instantaneous dynamics.

\paragraph{Mitigate mode ($m=2$)}
In mitigate mode we enforce contraction by increasing damping and accelerating
memory decay:
\[
B_i^{(2)} = B_i^{(0)} - \delta I, \qquad
r_{i,k}^{(2)} = r_{i,k}^{(0)} + \Delta_r,
\]
with $\delta,\Delta_r>0$.

\begin{assumption}[Mitigation contraction]
	\label{ass:mitigation_contraction}
	There exists $\kappa>0$ such that
	\[
	\lambda_{\max}\!\left(
	\frac{A_U^{(2)} + (A_U^{(2)})^\top}{2}
	\right)
	\le -\kappa .
	\]
\end{assumption}

This condition guarantees instantaneous contraction of the lifted energy and
implies $\rho(e^{A_U^{(2)} t})\le e^{-\kappa t}$.

\subsection{Annealed boundedness on finite horizons}
\label{subsec:annealed}

\begin{theorem}[Finite-horizon annealed boundedness]
	\label{thm:annealed}
	Assume $\sup_i \sup_{t\in[0,T]}\|e^{A_i^{(0)}t}\|<\infty$ and $f\in L^1([0,T])$.
	Then
	\[
	\sup_{t\in[0,T]} \mathbb E \|X(t)\|_2 < \infty .
	\]
\end{theorem}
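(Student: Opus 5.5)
The plan is to follow the concatenation strategy already used in the proof of Theorem~\ref{thm:dddas_trunc}. Here, though, the system is propagated through the regime-wise semigroups $e^{A_i^{(0)}t}$, not through the crude Gr\"onwall bound with $A_\star$. I read the statement for the normal-mode dynamics $m(t)\equiv 0$, since the hypothesis only concerns $A_i^{(0)}$. Set
\[
M:=\max\Big\{1,\ \sup_i\sup_{t\in[0,T]}\|e^{A_i^{(0)}t}\|_2\Big\}<\infty .
\]
Let $N_T$ be the number of jumps of $z(\cdot)$ in $(0,T]$. The key object is the evolution family $\Phi(t,s)$ of $\dot X=A^{(0)}_{z(t)}X$. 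For $s\le t$ in $[0,T]$ it is the ordered product of the factors $e^{A_{z_k}^{(0)}(\cdot)}$ over the switching intervals that meet $[s,t]$. Each such interval has length at most $T$, so each factor has norm at most $M$. There are at most $N_T+1$ factors, and $M\ge1$, so this gives the pathwise bound
\[
\|\Phi(t,s)\|_2\le M^{N_T+1}\qquad\text{for all }0\le s\le t\le T .
\]

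Next I use the variation-of-constants formula $X(t)=\Phi(t,0)X_0+\int_0^t\Phi(t,s)\widetilde f(s)\,ds$. It is valid for $f\in L^1$ because the coefficient matrix is piecewise constant, so the solution is absolutely continuous on each switching interval and is glued continuously at the jumps. Combined with the bound on $\Phi$ it gives
\[
\sup_{t\in[0,T]}\|X(t)\|_2\le M^{N_T+1}\big(\|X_0\|_2+\|f\|_{L^1([0,T])}\big),
\]
where I use $\|\widetilde f(s)\|_2=\|f(s)\|_2$. The right-hand side does not depend on $t$. It therefore suffices to show $\mathbb{E}\,M^{N_T}<\infty$, after which taking expectations yields the claim (and, incidentally, the stronger $\mathbb{E}\sup_t\|X(t)\|_2<\infty$).

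The main obstacle is this last step: $N_T$ is not bounded pathwise, so the factor $M^{N_T}$ is random and can be large. I would handle it with a uniformization (thinning) argument. Let $\lambda^*:=\max_i\lambda_i<\infty$, which is finite because the state space is finite. The jump times of $z$ can be realized as a subset of the points of a Poisson process of rate $\lambda^*$. Hence $N_T$ is stochastically dominated by a $\mathrm{Poisson}(\lambda^*T)$ variable, and since $x\mapsto M^x$ is nondecreasing,
\[
\mathbb{E}\,M^{N_T}\le \exp\big(\lambda^*T(M-1)\big)<\infty .
\]
An alternative that avoids the coupling is a direct induction on the jump times. Conditioning on the embedded chain, the holding times are independent exponentials with rates at most $\lambda^*$, so $\mathbb{P}(N_T\ge k)\le \mathbb{P}(\mathrm{Poisson}(\lambda^*T)\ge k)$, and summing $M^k\,\mathbb{P}(N_T=k)$ gives the same bound.

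If one insists on including the DDDAS modes, the same argument goes through. There are finitely many matrices $A_i^{(m)}$, so $\sup_{i,m}\sup_{t\le T}\|e^{A_i^{(m)}t}\|$ is automatically finite. Mode switches occur at most $\lceil T/\Delta_{\min}\rceil$ times, which only multiplies the bound by a deterministic power of that supremum.
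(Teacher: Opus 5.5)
Your proof is correct and follows the paper's own route: concatenate variation-of-constants formulas across switching intervals, bound each regime factor uniformly, then take expectations. It also supplies the step that the paper's phrase ``taking expectations preserves finiteness'' glosses over: the pathwise bound $M^{N_T+1}$ is random and unbounded, so an exponential moment of the jump count, such as your Poisson-domination estimate $\mathbb{E}\,M^{N_T}\le e^{\lambda^*T(M-1)}$, is genuinely required. (In this finite-dimensional setting you could avoid that step. Bound the factor on a switching interval of length $\ell$ by $e^{\|A_{z_k}^{(0)}\|_2\,\ell}$ instead of $M$, and the product telescopes to the deterministic bound $e^{A_\star T}$, as in Theorem~\ref{thm:finite_horizon_bound} with $\|f\|_{L^1}$ in place of $T\|\widetilde f\|_{L^\infty}$. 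Your version is the one that still works when only semigroup bounds, not operator-norm bounds, are available.)
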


\begin{proof}
	Concatenate variation-of-constants formulas between switching times and use
	uniform operator bounds. Taking expectations preserves finiteness.
\end{proof}
\subsection{Heavy-tail mechanism from unfavorable dwell times}
\label{subsec:heavy_tail}

\begin{assumption}[Effective growth channel in $U$]
	\label{ass:growth_channel}
	There exists $\gamma_U>0$ and a unit vector $v$ such that
	\[
	\left\langle v, \frac{A_U^{(0)}+(A_U^{(0)})^\top}{2} v \right\rangle \ge \gamma_U .
	\]
\end{assumption}

\begin{lemma}[Exponential growth on an unfavorable dwell]
	\label{lem:exp_growth}
	If $z(t)\equiv U$ on $[t_0,t_0+\tau]$ and Assumption~\ref{ass:growth_channel}
	holds, then
	\[
	\|X(t_0+\tau)\|_2 \ge c_1 e^{\gamma_U \tau} - c_2 ,
	\]
	for constants $c_1,c_2>0$.
\end{lemma}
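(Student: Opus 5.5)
The plan is to reduce the statement to the cone-projected machinery of Lemmas~\ref{lem:cone_projected_growth} and~\ref{lem:norm_lower_cone}. Assumption~\ref{ass:growth_channel} alone cannot force growth of $\|X\|_2$. The trajectory may start at $X(t_0)=0$ with $\widetilde f\equiv 0$, or it may simply rotate away from $v$, and the Rayleigh quotient $v^\top A_U^{(0)}v\ge\gamma_U$ says nothing about $\langle X, A_U^{(0)}X\rangle$ off the direction $v$. So I will read the lemma, as in Subsection~\ref{subsec:quenched_tails}, as holding on amplifying $U$-dwells. Concretely, I will assume $m(t)\equiv 0$ on $[t_0,t_0+\tau]$, together with the cone condition \eqref{eq:cone} and the forcing-injection condition \eqref{eq:forcing_cone}, both with $v_U$ replaced by $v$. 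The first step is to check that the proof of Lemma~\ref{lem:cone_projected_growth} never uses that $v_U$ \emph{maximizes} the quadratic form. It only uses $v^\top A_U^{(0)}v\ge\gamma_U$, which is exactly Assumption~\ref{ass:growth_channel}, since $\langle v,\tfrac{A+A^\top}{2}v\rangle=v^\top Av$.

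Set $\zeta(t):=v^\top X(t)$. On the dwell interval $\dot\zeta=v^\top A_U^{(0)}X+v^\top\widetilde f$. I split $X=\zeta v+w$ with $w\perp v$. This gives $v^\top A_U^{(0)}X\ge \gamma_U\zeta - \|A_U^{(0)}\|_2\|w\|_2$, where the first term is nonnegative because the cone condition gives $\zeta\ge0$. The cone condition then yields $\|w\|_2\le \tfrac{\sqrt{1-\alpha^2}}{\alpha}\zeta$, so
\[
\dot\zeta\ \ge\ \gamma_U^{\mathrm{cone}}\zeta+c_f,\qquad \gamma_U^{\mathrm{cone}}=\gamma_U-\|A_U^{(0)}\|_2\tfrac{\sqrt{1-\alpha^2}}{\alpha}.
\]
As in Remark~\ref{rem:effective_gammaU}, the $\gamma_U$ appearing in the conclusion is understood to be this effective rate, assumed positive. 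Multiplying by the integrating factor $e^{-\gamma_U(t-t_0)}$ and integrating gives, without any largeness assumption on $\tau$,
\[
\zeta(t_0+\tau)\ \ge\ \Big(\zeta(t_0)+\frac{c_f}{\gamma_U}\Big)e^{\gamma_U\tau}-\frac{c_f}{\gamma_U}.
\]

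Finally, $\|X(t_0+\tau)\|_2\ge v^\top X(t_0+\tau)=\zeta(t_0+\tau)$ because $\|v\|_2=1$. So the claim holds with $c_1:=\zeta(t_0)+c_f/\gamma_U$ and $c_2:=c_f/\gamma_U$. Both are strictly positive, since $\zeta(t_0)\ge\alpha\|X(t_0)\|_2\ge0$ and $c_f,\gamma_U>0$. This explicit form avoids the ``$\tau$ large enough'' step used in Lemma~\ref{lem:norm_lower_cone}.

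The main obstacle is the step that controls the off-channel component $w$. Without an alignment hypothesis the lemma is false, so the real work is in justifying (or explicitly importing) the cone condition and in making sure the effective rate $\gamma_U^{\mathrm{cone}}$ stays positive. I will state clearly that the $\gamma_U$ in the conclusion is the effective cone rate and that the two alignment assumptions are inherited from Subsection~\ref{subsec:quenched_tails}.
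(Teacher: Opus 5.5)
Your proposal is correct for the only version of the lemma that can hold. It follows the paper's route: project onto $v$, derive a scalar linear differential inequality, and integrate it exactly. The difference is in how that inequality is justified.

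\textbf{Where the paper's proof falls short.} The paper asserts $\dot\xi\ge\gamma_U\xi-C$ directly from Assumption~\ref{ass:growth_channel}. But a lower bound on the Rayleigh quotient $v^\top A_U^{(0)}v$ gives no control of $v^\top A_U^{(0)}X$ for $X$ not parallel to $v$. Even granting the inequality, exact integration yields $c_1=\xi(t_0)-C/\gamma_U$. That is positive only if the initial projection already exceeds $C/\gamma_U$.

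\textbf{A stronger counterexample.} You can go beyond the degenerate case $X(t_0)=0$, $\widetilde f\equiv 0$. The paper itself notes that $\gamma_U>0$ is compatible with $A_U^{(0)}$ being Hurwitz. In that case bounded forcing keeps $\|X(t_0+\tau)\|_2$ bounded uniformly in $\tau$. So the stated bound, with $\tau$-independent $c_1>0$ as Theorem~\ref{thm:tail} needs, fails for \emph{every} initial state. Additional hypotheses are therefore unavoidable, as you say.

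\textbf{Your repair.} The hypotheses you import are $m\equiv 0$, the cone condition, and forcing injection along $v$. These are exactly those of the paper's own rigorous version, Lemmas~\ref{lem:cone_projected_growth} and~\ref{lem:norm_lower_cone}. You improve on it in two small ways.
\begin{enumerate}
\item You need only $v^\top A_U^{(0)}v\ge\gamma_U$, not maximality. The operative point is that $\zeta\ge 0$ lets you multiply this inequality by $\zeta$, not that $\gamma_U\zeta\ge 0$ as your wording suggests.
\item Your integrated bound holds for all $\tau>0$, with no ``$\tau$ large enough'' step. Since $\zeta(t_0)\ge 0$, you may even take $c_1=c_2=c_f/\gamma_U^{\mathrm{cone}}$, independent of $X(t_0)$. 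This is the uniformity one wants when the lemma is applied on random dwell windows.
\end{enumerate}

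\textbf{The price,} which you acknowledge, is twofold. The rate in the conclusion is $\gamma_U^{\mathrm{cone}}\le\gamma_U$ rather than the constant of Assumption~\ref{ass:growth_channel}. And the cone condition must hold on all of $[t_0,t_0+\tau]$. For Hurwitz $A_U^{(0)}$ this cannot happen on arbitrarily long dwells. That limits how Theorem~\ref{thm:tail} can use the lemma, but not the validity of your proof.
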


\begin{proof}
	Project \eqref{eq:lifted_ode} onto $v$, obtain
	$\dot \xi \ge \gamma_U \xi - C$, and integrate exactly.
\end{proof}

\begin{theorem}[Polynomial tail law]
	\label{thm:tail}
	If $\tau_U\sim \mathrm{Exp}(\lambda_U)$, then
	\[
	\mathbb P(B_T>b) \asymp b^{-\lambda_U/\gamma_U}.
	\]
\end{theorem}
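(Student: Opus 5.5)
The statement is two-sided ($\asymp$), so the plan has two halves: a lower bound $\mathbb{P}(B_T>b)\gtrsim b^{-\lambda_U/\gamma_U}$ and a matching upper bound $\mathbb{P}(B_T>b)\lesssim b^{-\lambda_U/\gamma_U}$, both understood on the admissible range of $b$ where $\gamma_U^{-1}\log b$ does not exceed the horizon. Here $\gamma_U$ is interpreted as the effective growth rate of the cone mechanism, as in Remark~\ref{rem:effective_gammaU}.

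\textbf{Lower bound.} I would not redo this; it is Theorem~\ref{thm:poly_tail_cone}. Lemma~\ref{lem:exp_growth} is the short version of Lemma~\ref{lem:norm_lower_cone}, and it gives $\|X(t_0+\tau)\|_2\ge c_1e^{\gamma_U\tau}-c_2$ on a $U$-dwell. Taking $\tau_b=\gamma_U^{-1}\log((b+c_2)/c_1)$ and the event $\{z(0)=U,\ \tau_U\ge\tau_b\}$ gives $\mathbb{P}(B_T>b)\ge C_-b^{-\lambda_U/\gamma_U}$ for $b\ge b_0$ with $\tau_b\le T_0$. Note that Assumption~\ref{ass:growth_channel} by itself does not give a projected differential inequality. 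So in this step I would invoke the cone and forcing Assumptions~\ref{ass:cone_condition}–\ref{ass:forcing_in_cone}, exactly as in Lemma~\ref{lem:cone_projected_growth}.

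\textbf{Upper bound.} The plan is a pathwise bound by the total time spent in $U$, followed by a tail estimate for that time.

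\emph{Step (a): pathwise bound.} Let $\bar\gamma_U:=\mu_2(A_U^{(0)})$ and assume the favorable regime satisfies $\mu_2(A_S^{(0)})\le 0$. Lemma~\ref{lem:energy_ineq} and Grönwall give
\[
\|X(t)\|_2\le \Big(\|X_0\|_2+T\|\widetilde f\|_\infty\Big)\exp\!\Big(\bar\gamma_U\,\mathrm{Leb}\{s\le T: z(s)=U\}\Big).
\]
Hence $\{B_T>b\}\subset\{\Theta_T\ge \bar\gamma_U^{-1}\log(b/c_3)\}$, where $\Theta_T$ is the occupation time of $U$ on $[0,T]$.

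\emph{Step (b): tail of the occupation time.} $\Theta_T$ is a sum of at most $N_T+1$ dwell segments, where $N_T$ is the number of jumps. $N_T$ is dominated by a Poisson variable with rate $\max_i\lambda_i$. The dwell lengths are i.i.d.\ $\mathrm{Exp}(\lambda_U)$, truncated by $T$.

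\begin{itemize}
\item A single long dwell has probability $e^{-\lambda_U\theta}$, which reproduces the lower-bound exponent.
\item With $n$ visits, the sum has a Gamma tail, $\mathbb{P}(\mathrm{Gamma}(n,\lambda_U)\ge\theta)\le C_n\theta^{n-1}e^{-\lambda_U\theta}$.
\item Summing over $n$ against the Poisson weights converges because $\theta\le T$. The result is $\mathbb{P}(\Theta_T\ge\theta)\le C_T(1+\theta)^{k}e^{-\lambda_U\theta}$, and in fact $\le C_T'e^{-\lambda_U\theta}$ since $\theta$ ranges over a compact set.
\end{itemize}

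Substituting $\theta=\bar\gamma_U^{-1}\log(b/c_3)$ yields $\mathbb{P}(B_T>b)\le C_+b^{-\lambda_U/\bar\gamma_U}$.

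\textbf{Main obstacle.} The difficulty is matching the rates. The upper bound naturally uses the matrix measure $\bar\gamma_U=\mu_2(A_U^{(0)})$, while the lower bound uses the cone rate $\gamma_U^{\mathrm{cone}}\le\bar\gamma_U$. The two exponents coincide only when $\alpha\to1$, i.e.\ when trajectories align with $v_U$. To close the gap I would try to show that on long $U$-dwells $X(t)/\|X(t)\|$ converges to the dominant direction of $e^{A_U^{(0)}t}$. This would identify the effective rate as the top Lyapunov exponent on the relevant time scale, so that both bounds use it.

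If that fails, I would state the result honestly as
\[
C_-\,b^{-\lambda_U/\gamma_U^{\mathrm{cone}}}\le \mathbb{P}(B_T>b)\le C_+\,b^{-\lambda_U/\mu_2(A_U^{(0)})}.
\]
Here $\asymp$ is read as equality of exponents under the alignment hypothesis $\alpha=1$, which is the regime monitored in Remark~\ref{rem:cone_verification}.

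A second technical point is the finite horizon. Because $\Theta_T\le T$, the tail is identically zero for $b$ beyond $c_3e^{\bar\gamma_U T}$. The power law therefore holds only on the window $b_0\le b\le e^{\gamma_U T_0}$, consistent with condition~\eqref{eq:log_condition_cone}.
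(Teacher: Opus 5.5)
\textbf{Lower half.} This half of your proposal is the paper's proof. The argument given for Theorem~\ref{thm:tail} is only ``Lemma~\ref{lem:exp_growth} plus $\mathbb P(\tau_U>\tau)=e^{-\lambda_U\tau}$ at $\tau\approx\gamma_U^{-1}\log b$''. You are right that Assumption~\ref{ass:growth_channel} alone does not give $\dot\xi\ge\gamma_U\xi-C$: the term $v^\top A_U^{(0)}X$ involves the component of $X$ orthogonal to $v$, which that assumption does not control. So the cone and forcing hypotheses of Lemma~\ref{lem:cone_projected_growth} are genuinely needed.

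\textbf{Upper half.} The paper offers nothing here. Your Step~(a) is correct, and it does not even need $\mu_2(A_S^{(0)})\le 0$, since a factor $e^{\max\{\mu_2(A_S^{(0)}),0\}T}$ is just a constant. But the proposal still does not prove the statement, and the obstacle you identify is not the real one.

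\emph{Compactness makes the exponent meaningless.} Your own compactness remark makes Step~(b) superfluous, since $\mathbb P(\Theta_T\ge\theta)\le 1\le e^{\lambda_U T}e^{-\lambda_U\theta}$ for $\theta\le T$. It also dissolves the ``rate mismatch''. Since $\mathbb P(B_T>b)=0$ for $b>b_{\max}:=c_3e^{\bar\gamma_U T}$, you get $\mathbb P(B_T>b)\le b_{\max}^{\,p}\,b^{-p}$ for \emph{every} $p>0$, including $p=\lambda_U/\gamma_U^{\mathrm{cone}}$. Likewise, the lower bound on $[b_0,e^{\gamma_U T_0}]$ holds with any exponent once $\mathbb P(B_T>e^{\gamma_U T_0})>0$, which the lower bound itself provides. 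With $T$-dependent constants, $\asymp$ is therefore true for all exponents and identifies nothing.

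\emph{Uniform constants: the upper half fails.} If you demand constants uniform in $T$, so that the exponent means something, the upper half fails even when all your rates coincide. Take the scalar system $\dot x=\gamma x+f$ on $U$ and $\dot x=-\sigma x+f$ on $S$, with $f\ge c>0$ and $\gamma/\lambda_U<\sigma/\lambda_S$. Here $\alpha=1$ and $\gamma_U^{\mathrm{cone}}=\mu_2=\gamma$. Successive $U$-visits compound multiplicatively, and the Kesten equation $\mathbb E\,e^{\kappa(\gamma\tau_U-\sigma\tau_S)}=1$ gives the stationary tail index $\kappa=\lambda_U/\gamma-\lambda_S/\sigma<\lambda_U/\gamma$. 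So no $T$-uniform $C_+$ works with exponent $\lambda_U/\gamma$. The real issue is multi-visit compounding, not alignment.

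\emph{Your proposed repair cannot work.} The asymptotic rate along the dominant direction of $e^{A_U^{(0)}t}$ is the spectral abscissa of $A_U^{(0)}$, not $\mu_2$ or $\gamma_U^{\mathrm{cone}}$. It is negative in the Hurwitz case the paper stresses. There, on any $U$-dwell $[t_0,t]$, $\|X(t)\|_2\le C(\|X(t_0)\|_2+\|f\|_\infty)$ with $C$ independent of the dwell length. By \eqref{eq:zeta_cone_solution}, the cone inequality with $\gamma_U^{\mathrm{cone}}>0$ can then persist only for a bounded time, which is precisely why the window $\gamma_U^{-1}\log b\le T_0$ appears.

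What you and the paper actually establish is the one-sided bound $\mathbb P(B_T>b)\ge C_-\,b^{-\lambda_U/\gamma_U}$ on that window, with explicit $C_-$.
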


\begin{proof}
	Combine Lemma~\ref{lem:exp_growth} with
	$\mathbb P(\tau_U>\tau)=e^{-\lambda_U\tau}$ and set
	$\tau\sim \gamma_U^{-1}\log b$.
\end{proof}
\subsection{Tail truncation and exponent improvement under DDDAS}
\label{subsec:dddas_results}

\begin{theorem}[Tail truncation under mitigation]
	\label{thm:dddas}
	Under Assumption~\ref{ass:mitigation_contraction}, there exists $M_T<\infty$ such
	that $B_T\le M_T$ almost surely.
\end{theorem}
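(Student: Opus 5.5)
The plan is to reduce Theorem~\ref{thm:dddas} to the pathwise estimates already established in Section~\ref{sec:math_results}, namely Theorem~\ref{thm:finite_horizon_bound} and Theorem~\ref{thm:dddas_trunc}. The approach is to verify the hypotheses of Theorem~\ref{thm:dddas_trunc} in the concrete setting of Section~3: the regime set is $\mathcal Z=\{S,U\}$, the modes are $m\in\{0,1,2\}$, and $A_i^{(m)}$ are the block matrices \eqref{eq:lifted_structure}. I assume throughout, as in Section~\ref{sec:math_results}, that $f\in L^\infty([0,T])$ and that $X_0$ is deterministic (or a.s.\ bounded).

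\textbf{Step 1 (uniform operator bound).} There are only finitely many matrices $A_i^{(m)}$ (six here). Hence $A_\star:=\max_{i,m}\|A_i^{(m)}\|_2<\infty$, and Assumption~\ref{ass:uniform_opnorm} holds automatically.

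\textbf{Step 2 (strict contraction in mitigate mode).} Assumption~\ref{ass:mitigation_contraction} says that the logarithmic norm satisfies $\mu_2(A_U^{(2)})\le-\kappa$. Lemma~\ref{lem:energy_ineq} with $\widetilde f\equiv0$ then gives $\|e^{A_U^{(2)}t}X\|_2\le e^{-\kappa t}\|X\|_2$. This is \eqref{eq:mitigate_contraction} with $M_c=1$. Consequently $\rho=e^{-\kappa\Delta_{\min}}<1$ for every $\Delta_{\min}>0$, so \eqref{eq:rho_condition} holds for free, and Assumption~\ref{ass:strict_contraction} is satisfied.

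\textbf{Step 3 (conclusion).} Theorem~\ref{thm:dddas_trunc} now yields a deterministic constant $M_T=C_T\|X_0\|_2+D_T\|\widetilde f\|_\infty$ with $\sup_{t\le T}\|X(t)\|_2\le M_T$ on every sample path. Since $E(t)\le\|X(t)\|_2$, it follows that $B_T\le M_T$ almost surely.

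As a cross-check, the bound \eqref{eq:pathwise_bound} from Theorem~\ref{thm:finite_horizon_bound} already gives $M_T=(\|X_0\|_2+T\|\widetilde f\|_\infty)e^{A_\star T}$ directly. So finite-horizon truncation really rests only on Step~1. The role of the mitigation contraction is to make $M_T$ better behaved: each mitigation interval of length at least $\Delta_{\min}$ shrinks the norm by the factor $\rho$. I would record the sharper constant by iterating the affine recursion from the proof of Theorem~\ref{thm:dddas_trunc}, namely $\|X(a+\Delta)\|\le\rho\|X(a)\|+\|\widetilde f\|_\infty/\kappa$ during mitigation and growth at most $e^{A_\star s}$ over other stretches of length $s$, with at most $\lceil T/\Delta_{\min}\rceil$ mitigation episodes.

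\textbf{Main obstacle.} I expect the only delicate point to be well-posedness of the mode process $m(\cdot)$ under the hysteresis rule. I need that on $[0,T]$ there are only finitely many mode changes, so that the concatenation argument of Lemma~\ref{lem:energy_ineq} and Theorem~\ref{thm:finite_horizon_bound} applies. This follows from the minimum dwell $\Delta_{\min}$, which gives at most $\lceil T/\Delta_{\min}\rceil$ mode switches, together with the a.s.\ finiteness of CTMC jumps on $[0,T]$. With that in place, the solution is absolutely continuous and the Grönwall bound holds pathwise.

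A secondary point is that if $f$ were only in $L^1$, as in Theorem~\ref{thm:annealed}, I would replace $T\|\widetilde f\|_\infty$ by $\|\widetilde f\|_{L^1}$ in the variation-of-constants estimate. The rest of the argument is unchanged.
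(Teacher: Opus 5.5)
Your proof is correct. Steps 1--3 follow the paper's own route. The paper's proof is the sketch ``each mitigation window contracts $\|X\|_2$; concatenate over $[0,T]$'', which is exactly the argument of Theorem~\ref{thm:dddas_trunc}. You make it precise by noting that Assumption~\ref{ass:uniform_opnorm} is automatic for finitely many matrices, and that $\mu_2(A_U^{(2)})\le-\kappa$ gives Assumption~\ref{ass:strict_contraction} with $M_c=1$.

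Where you genuinely go beyond the paper is your cross-check, and it is the sounder proof. The Gr\"onwall estimate \eqref{eq:pathwise_bound} of Theorem~\ref{thm:finite_horizon_bound} already gives the deterministic cap $M_T=(\|X_0\|_2+T\|\widetilde f\|_\infty)e^{A_\star T}$ on every sample path. So the statement holds without Assumption~\ref{ass:mitigation_contraction} at all.

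That route also avoids a looseness in the concatenation argument. The minimum dwell keeps $m(t)=2$ for $\Delta_{\min}$, but the exogenous chain may leave $U$ inside that window, and nothing is assumed about $A_S^{(2)}$. So a full factor $\rho$ per mitigation episode is not guaranteed, and nothing in the hypotheses guarantees that mitigation is ever triggered. The paper's ``worst-case sequence'' ultimately rests on the $e^{A_\star T}$ growth factor anyway.

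The same caveat applies to your optional sharper recursion $\|X(a+\Delta)\|\le\rho\|X(a)\|+\|\widetilde f\|_\infty/\kappa$. State it only for windows on which $z\equiv U$ and $m\equiv 2$. Your claim that contraction makes $M_T$ ``better behaved'' is not established, since a deterministic $M_T$ must also cover paths on which mitigation never acts in $U$.

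In short, the paper's framing ties the cap to the DDDAS policy. Your argument gives an explicit, fully rigorous constant and shows that, as stated, the truncation comes from bounded operators and bounded forcing on a finite horizon, not from mitigation.
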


\begin{proof}
	Each mitigation window enforces exponential contraction of $\|X(t)\|_2$.
	Concatenation over $[0,T]$ yields a uniform bound.
\end{proof}

\begin{remark}[Tail index reduction]
	If mitigation reduces the effective growth rate from $\gamma_U$ to
	$\gamma_U^{\mathrm{eff}}<\gamma_U$, then the tail exponent improves from
	$\lambda_U/\gamma_U$ to $\lambda_U/\gamma_U^{\mathrm{eff}}$.
\end{remark}

\paragraph{Connection to experiments}
In numerical experiments, $\gamma_U$ is estimated as the empirical slope of
$\log \|X(t)\|$ during isolated unfavorable dwell intervals. The dwell rate
$\lambda_U$ is estimated from regime statistics. The CCDF of $B_T$ is fitted on
log--log axes, and the observed slope is compared against the theoretical tail
index $\lambda_U/\gamma_U$.

\section{Numerical Experiments}
\label{sec:numerics}

We now validate the theoretical mechanism developed in
Section~\ref{sec:math_results}, namely: (i) finite-horizon annealed boundedness
coexisting with (ii) quenched heavy-tailed burst statistics generated by rare
unfavorable dwell times and lifted-memory amplification channels, and (iii) the
ability of a DDDAS rule to truncate or steepen tails by operator-level
intervention without altering the exogenous regime statistics.

\subsection{Numerical setup and parameterization}
\label{subsec:numerical_setup}

All experiments are conducted directly on the fully lifted switched ODE model,
consistent with the analysis in
Subsections~\ref{subsec:lifted_model_solution}--\ref{subsec:dddas_policy}.
Specifically, we simulate the SOE-lifted dynamics
\begin{equation}
	\label{eq:sim_base_lifted}
	\begin{aligned}
		\dot{x}(t) &= B_{z(t)}\,x(t) + \sum_{k=1}^{K} w_k\,y_k(t) + f(t),\\
		\dot{y}_k(t) &= x(t) - r_k y_k(t), \qquad k=1,\dots,K,
	\end{aligned}
\end{equation}
where $x(t)\in\mathbb{R}^n$ is the network state and $\{y_k(t)\}_{k=1}^K$ are
auxiliary memory states induced by a nonnegative sum-of-exponentials (SOE)
approximation of a completely monotone Volterra kernel (Appendix~\ref{app:soe}).
The lifted state is $X(t)=(x(t),y_1(t),\dots,y_K(t))^\top$.

\paragraph{Regime switching}
The regime process $z(t)$ is a two-state continuous-time Markov chain with
state space
\[
z(t)\in\{S \ \text{(benign)},\, U \ \text{(unfavorable)}\},
\]
and exponentially distributed dwell times with rates $\lambda_{SU}$ and
$\lambda_{US}$. In particular, the exit rate from $U$ is $\lambda_U=\lambda_{US}$,
which is the quantity entering the theoretical tail index
$\lambda_U/\gamma_U$ in Subsection~\ref{subsec:quenched_tails}.

\paragraph{Instantaneous network operator}
The regime-dependent network operator takes the standard damped-coupling form
\begin{equation}
	\label{eq:Bz_def}
	B_z = -\gamma_z I + \beta_z W,
\end{equation}
where $W$ is a fixed directed adjacency matrix and $\gamma_z,\beta_z>0$.
Each regime is individually stable \emph{in the absence of memory}, i.e.,
\[
\gamma_z > \beta_z \rho(W),
\]
where $\rho(W)$ is the spectral radius. This isolates the role of memory and
regime persistence in generating extreme responses: any observed amplification
is a lifted-memory effect (conversion of stored stress into state growth) rather
than a trivial instantaneous instability of $B_z$.

\paragraph{Memory parameterization and heavy-tail regime}
The SOE parameters $\{(w_k,r_k)\}_{k=1}^K$ are chosen to approximate a slowly
decaying completely monotone kernel on $[0,T]$, with nonnegative weights $w_k$
and time scales $r_k$ spanning multiple orders of magnitude (Appendix~\ref{app:soe}).
This choice is made so that, in the unfavorable regime $U$ and in normal mode,
the lifted operator admits intervals of positive susceptibility in the sense of
\eqref{eq:gammaU_S_def}, i.e.\ an effective growth channel consistent with the
assumptions used in Subsection~\ref{subsec:quenched_tails}. Concretely, along
$U$-intervals we monitor the susceptibility indicator
\[
\mathcal{S}(t)=
\lambda_{\max}\!\left(
\frac{A_{z(t)}^{(m(t))}+(A_{z(t)}^{(m(t))})^\top}{2}
\right),
\]
and verify that $\mathcal{S}(t)$ becomes positive in the uncontrolled
memory-on configuration, thereby placing the system in the heavy-tail regime
predicted by the theory.

\paragraph{Forcing}
The external forcing is periodic and localized:
\begin{equation}
	f(t) = A \sin(\omega t)\,e_i,
\end{equation}
where $e_i$ excites a single node. This forcing provides persistent energy
injection while keeping the dynamics fully linear, allowing us to isolate
memory-induced tail effects under switching.

\paragraph{Observables}
The primary observable is the network energy
\begin{equation}
	E(t)=\|x(t)\|_2,
	\qquad
	B=\max_{t\in[0,T]} E(t),
\end{equation}
which defines the burst size used throughout.

\paragraph{Network topology}
The network topology used in all experiments is shown in
Figure~\ref{fig:network}. It is intentionally non-normal (directed), enabling
transient amplification when memory loading aligns the lifted state with
amplifying operator directions.

\begin{figure}[H]
	\centering
	\includegraphics[width=0.8\textwidth]{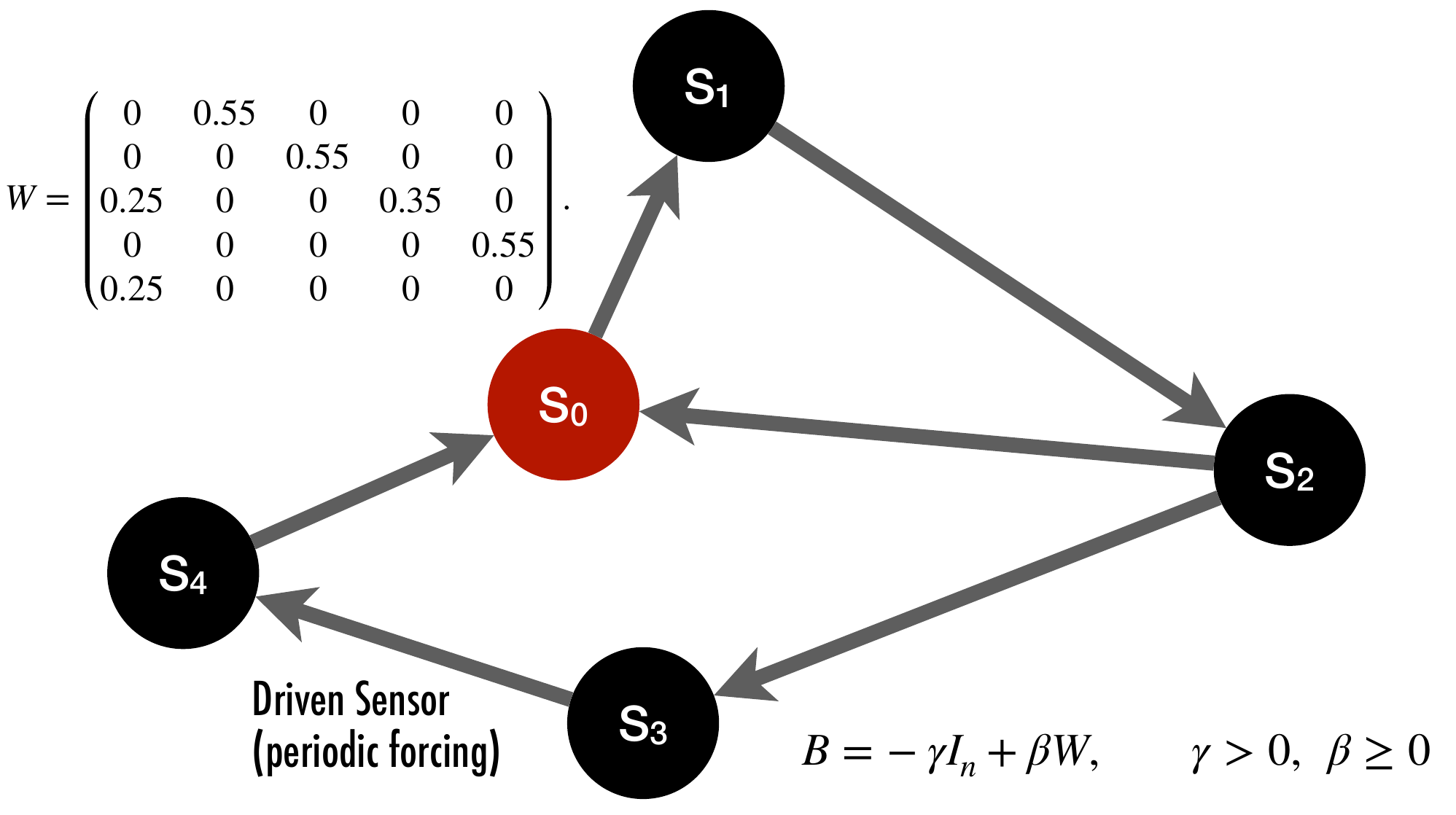}
	\caption{
		Illustrative directed sensor network used in the numerical experiments.
		Nodes represent sensing units, while directed weighted edges encode coupling strengths.
		The network dynamics are governed by
		\(
		B = -\gamma I + \beta W,
		\)
		where $\gamma>0$ controls intrinsic damping and $\beta$ scales inter-node coupling.
	}
	\label{fig:network}
\end{figure}

\subsection{Annealed stability, quenched amplification, and operator indicators}
\label{subsec:annealed_quenched_indicators}

Figure~\ref{fig:panel_memory_on} illustrates the central phenomenon of this work.
Despite individual regime stability and bounded annealed behavior, rare
quenched trajectories exhibit delayed but dramatic amplification.
This is the empirical signature of the annealed--quenched separation:
typical trajectories remain benign while a vanishing fraction dominates tail
risk on finite horizons.

\begin{figure}[H]
	\centering
	\includegraphics[width=\textwidth]{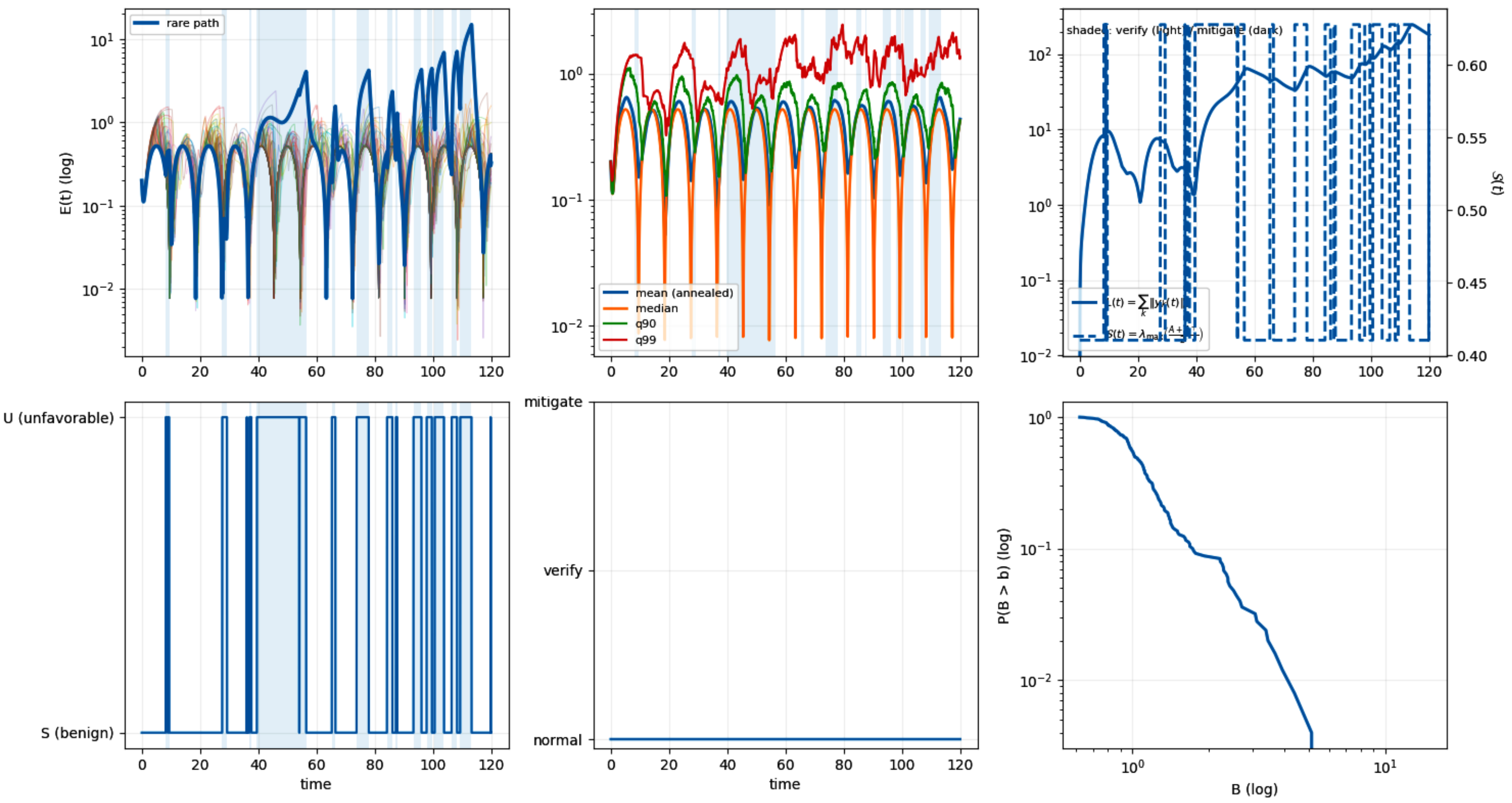}
	\caption{
		\textbf{Memory ON (no DDDAS): quenched amplification and tail risk under regime switching.}
		This figure summarizes an ensemble of Monte Carlo trajectories of the periodically forced networked system with SOE memory activated in the unfavorable regime.
		\textbf{(Top-left) Quenched sample paths.}
		Thin curves show $E(t)=\|x(t)\|_2$ for multiple realizations (same parameters, different regime paths), displayed on a logarithmic scale. The thick curve highlights the rare trajectory with the largest burst size $B=\max_{t\in[0,T]}E(t)$ in the ensemble. Shaded vertical bands indicate intervals where the regime is unfavorable ($z(t)=U$) along the rare trajectory. Despite stable-looking typical responses, rare paths exhibit delayed amplification driven by cumulative memory loading across repeated unfavorable episodes.
		\textbf{(Top-middle) Annealed statistics vs tail quantiles.}
		The ensemble mean (annealed) $E_{\text{mean}}(t)$, together with median and upper quantiles (e.g.\ $q_{0.9}$ and $q_{0.99}$), is shown on a log scale. The mean/median remain bounded and approximately periodic, while high quantiles drift upward, evidencing an annealed--quenched separation: typical behavior remains benign, yet extremes evolve differently.
		\textbf{(Top-right) Two indicators on the rare trajectory.}
		Left axis: latent memory load $L(t)=\sum_{k=1}^K\|y_k(t)\|_2$ (log scale), quantifying accumulated internal memory states in the SOE lifting. Right axis (dashed): operator susceptibility $\mathcal{S}(t)=\lambda_{\max}\!\left(\tfrac{A(t)+A(t)^\top}{2}\right)$ computed from the lifted operator. Peaks in $L(t)$ and elevated $\mathcal{S}(t)$ precede or coincide with amplification episodes, indicating entry into a dynamically dangerous configuration even when the nominal regime operator is individually stable.
		\textbf{(Bottom-left) Regime path for the rare trajectory.}
		The binary CTMC path $z(t)\in\{S,U\}$ corresponding to the highlighted rare realization. Importantly, large bursts need not be triggered by a single exceptionally long unfavorable dwell; instead, frequent unfavorable visits can accumulate through memory.
		\textbf{(Bottom-middle) Mode trace (here constant).}
		DDDAS mode $m(t)\in\{\text{normal},\text{verify},\text{mitigate}\}$ plotted for completeness; in this condition DDDAS is disabled, so the system remains in normal mode throughout.
		\textbf{(Bottom-right) Tail risk (CCDF of burst size).}
		Complementary cumulative distribution function $\mathbb{P}(B>b)$ of burst sizes across the ensemble (log--log axes), showing heavy-tailed behavior induced by the interaction of long memory and stochastic regime persistence.
	}
	\label{fig:panel_memory_on}
\end{figure}

The ensemble mean and median of $E(t)$ remain bounded and nearly periodic,
consistent with the finite-horizon annealed boundedness results in
Section~\ref{sec:math_results}. At the same time, the upper quantiles drift
and the maximal burst grows along rare realizations, in agreement with the
heavy-tail mechanism established in Subsection~\ref{subsec:quenched_tails}.
This provides a direct numerical counterpart of the theoretical statement that
annealed boundedness does not control quenched tails.

Along each realization we compute the two indicators introduced in
Subsection~\ref{subsec:observables_indicators}:
\[
L(t)=\sum_{k=1}^K \|y_k(t)\|_2,
\qquad
\mathcal{S}(t)=
\lambda_{\max}\!\left(
\frac{A_{z(t)}^{(m(t))}+(A_{z(t)}^{(m(t))})^\top}{2}
\right).
\]
As shown in Figure~\ref{fig:panel_memory_on} (top-right), both rise
systematically prior to amplification events. Notably, $\mathcal{S}(t)>0$
signals activation of an effective growth channel in the lifted operator
geometry (instantaneous Euclidean expansion), while $L(t)$ measures the
cumulative memory load that fuels delayed amplification across repeated
unfavorable episodes. This is precisely the operator-theoretic early-warning
signature exploited by the DDDAS policy.

\subsection{DDDAS intervention, baselines, and tail comparison}
\label{subsec:dddas_baselines_tail}

We next activate the DDDAS policy defined in
Subsection~\ref{subsec:dddas_policy}, using the operator-design principles of
Section~\ref{subsec:design_A}. The controller monitors $(L(t),\mathcal{S}(t))$
and modifies the lifted operator through verify/mitigate modes with hysteresis.
Crucially, the regime process $z(t)$ is not modified: the switching statistics
are exogenous and remain unchanged, so any tail reduction is attributable purely
to operator-level intervention in the lifted dynamics.

Figure~\ref{fig:panel_dddas} shows that adaptive intervention leaves annealed
statistics essentially unchanged, while strongly suppressing rare quenched
excursions. This is consistent with the pathwise boundedness and tail
truncation/exponent-improvement mechanisms established in
Subsection~\ref{subsec:dddas_policy}.

\begin{figure}[H]
	\centering
	\includegraphics[width=\textwidth]{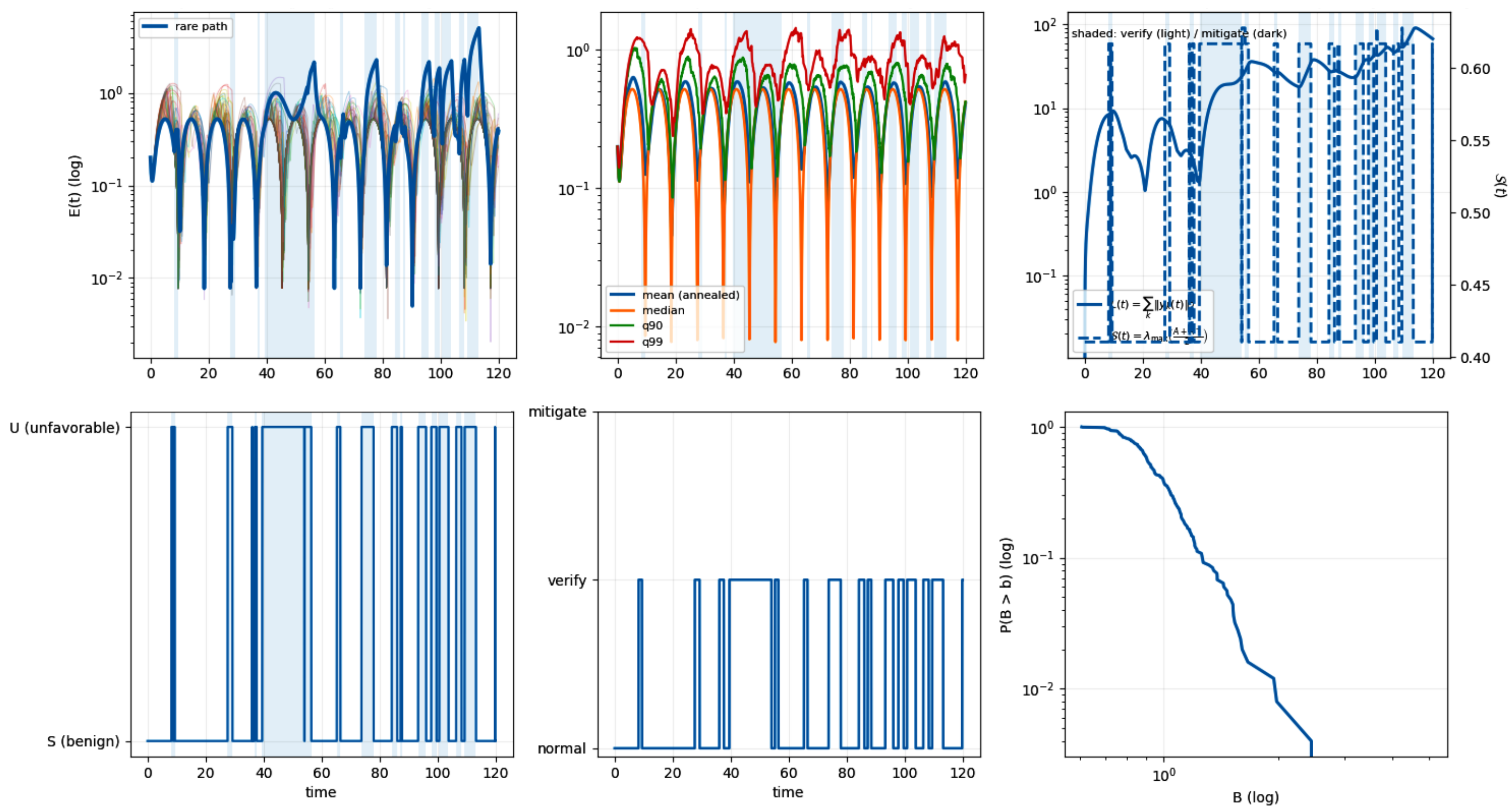}
	\caption{
		\textbf{Memory ON + DDDAS: tail-risk mitigation via a two-indicator hysteretic policy.}
		Same simulation setting as Fig.~\ref{fig:panel_memory_on}, but with DDDAS enabled. The policy monitors both the latent memory load $L(t)$ and the spectral susceptibility $\mathcal{S}(t)$, switching between normal/verify/mitigate modes with hysteresis (minimum mode duration) to avoid chattering.
		\textbf{(Top-left) Quenched sample paths.}
		Compared with the uncontrolled case, extreme quenched excursions are suppressed: the highlighted rare path achieves a substantially smaller maximum burst. Shaded bands again mark unfavorable-regime intervals for the rare trajectory.
		\textbf{(Top-middle) Annealed statistics vs tail quantiles.}
		Mean/median responses remain comparable (same forcing and regime statistics), while upper quantiles are pulled downward, showing that DDDAS primarily reduces extremes rather than distorting typical behavior.
		\textbf{(Top-right) Indicators and intervention windows (rare trajectory).}
		The latent load $L(t)$ (left axis, log scale) and susceptibility $\mathcal{S}(t)$ (right axis, dashed) increase during unfavorable periods, but when thresholds are crossed the controller enters verify or mitigate mode (background shading). Verify mode applies mild actions (e.g.\ reduced memory gain), whereas mitigate mode applies stronger actions (e.g.\ switching to a safer operator and partial memory discharge), preventing coherent accumulation of memory in the lifted states.
		\textbf{(Bottom-left) Regime path.}
		The CTMC regime path $z(t)$ for the rare trajectory; DDDAS does not alter regime switching and therefore preserves the statistics of $z(t)$.
		\textbf{(Bottom-middle) DDDAS mode trace.}
		The controller mode $m(t)$ over time (normal/verify/mitigate). Mode transitions occur primarily during unfavorable-regime intervals and are constrained by hysteresis, yielding sustained but non-aggressive interventions rather than rapid toggling.
		\textbf{(Bottom-right) Tail risk (CCDF of burst size).}
		The empirical CCDF $\mathbb{P}(B>b)$ across the ensemble shifts downward relative to Fig.~\ref{fig:panel_memory_on}, indicating a reduction in the probability of rare high-impact bursts while keeping the same regime process and forcing.
	}
	\label{fig:panel_dddas}
\end{figure}

We compare the DDDAS-controlled system with two baselines:
(i) a memory-free dynamics obtained by setting $w_k\equiv 0$, and
(ii) a SAFE-in-U system where the unfavorable regime is permanently replaced by
a strongly damped operator. Figures~\ref{fig:panel_nomem} and~\ref{fig:panel_safe}
show that both baselines suppress extreme events, but they do so by either
eliminating memory altogether (removing the diagnostic channel) or by modifying
the unfavorable regime dynamics structurally. In contrast, the proposed DDDAS
strategy preserves memory as an early-warning signal while selectively
truncating tail risk through targeted operator modifications.

\begin{figure}[H]
	\centering
	\includegraphics[width=\textwidth]{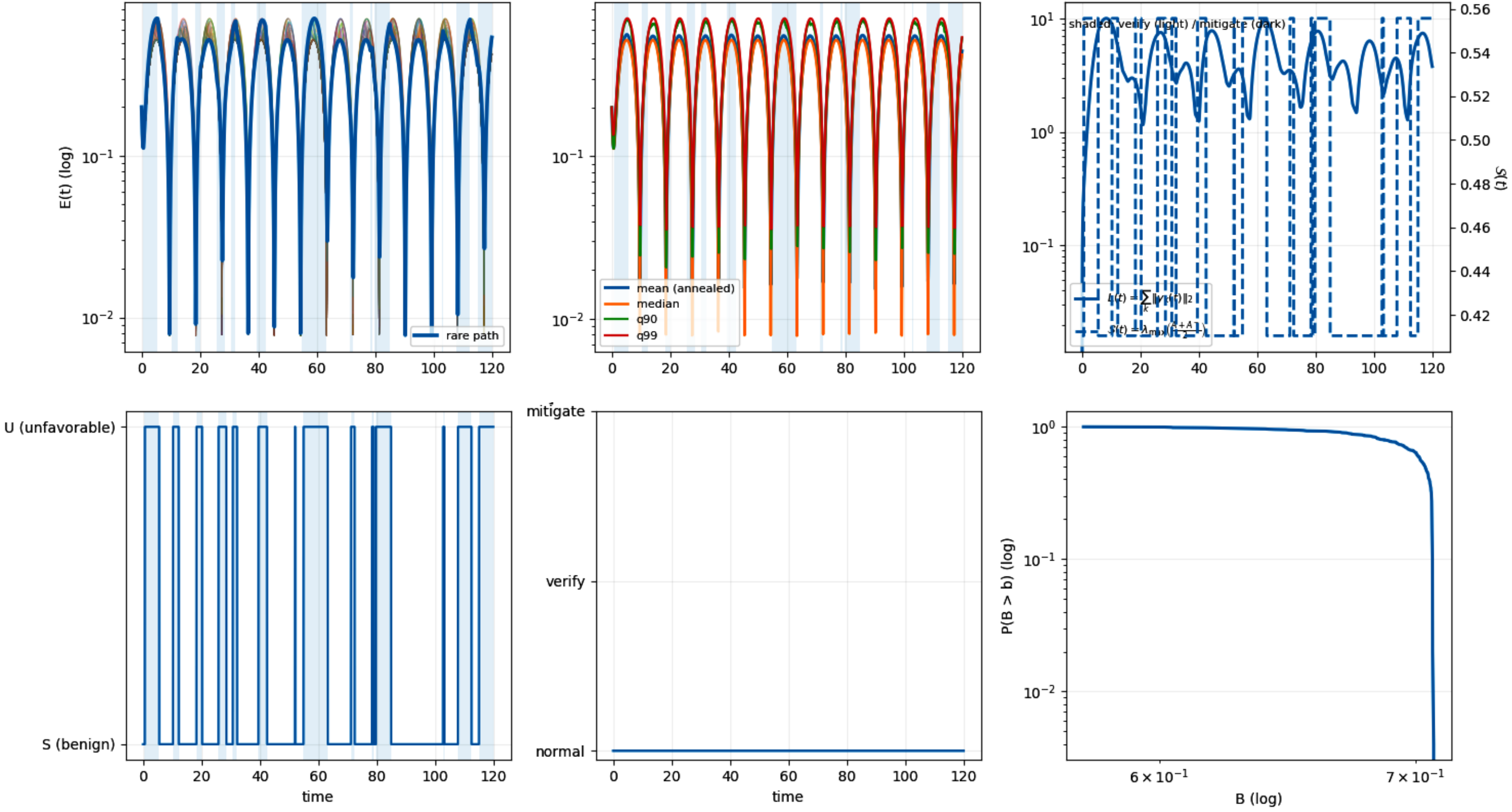}
	\caption{
		\textbf{Memory OFF (control): regime switching alone does not generate heavy tails.}
		Same network, forcing, and regime switching as in Figs.~\ref{fig:panel_memory_on}--\ref{fig:panel_dddas}, but with memory gain set to zero (no SOE memory contribution).
		\textbf{(Top-left) Quenched sample paths.}
		All realizations of $E(t)=\|x(t)\|_2$ remain tightly clustered around a bounded oscillatory response on the log scale, and the maximum-burst trajectory is not qualitatively distinct from typical paths, indicating the absence of delayed amplification.
		\textbf{(Top-middle) Annealed statistics vs tail quantiles.}
		Mean, median, and upper quantiles remain close and stationary over time, demonstrating that typical and extreme behaviors coincide in the memoryless system.
		\textbf{(Top-right) Indicators.}
		The latent load $L(t)$ remains near zero (up to numerical integration effects), and the spectral susceptibility $\mathcal{S}(t)$ reflects only the instantaneous geometry of the memoryless operator. Without memory states to accumulate, sustained increases in indicators do not occur and there is no pathway to coherent amplification.
		\textbf{(Bottom-left) Regime path.}
		The regime path $z(t)$ for the highlighted realization is shown for reference; despite unfavorable intervals, the dynamics remain bounded because there is no historical accumulation mechanism.
		\textbf{(Bottom-middle) Mode trace (here constant).}
		DDDAS is disabled in this control run and the system remains in normal mode.
		\textbf{(Bottom-right) Tail risk (CCDF of burst size).}
		The CCDF $\mathbb{P}(B>b)$ exhibits a sharp cutoff and decays rapidly, confirming that long memory is the essential ingredient that converts regime persistence into heavy-tailed burst statistics.
	}
	\label{fig:panel_nomem}
\end{figure}

\begin{figure}[H]
	\centering
	\includegraphics[width=\textwidth]{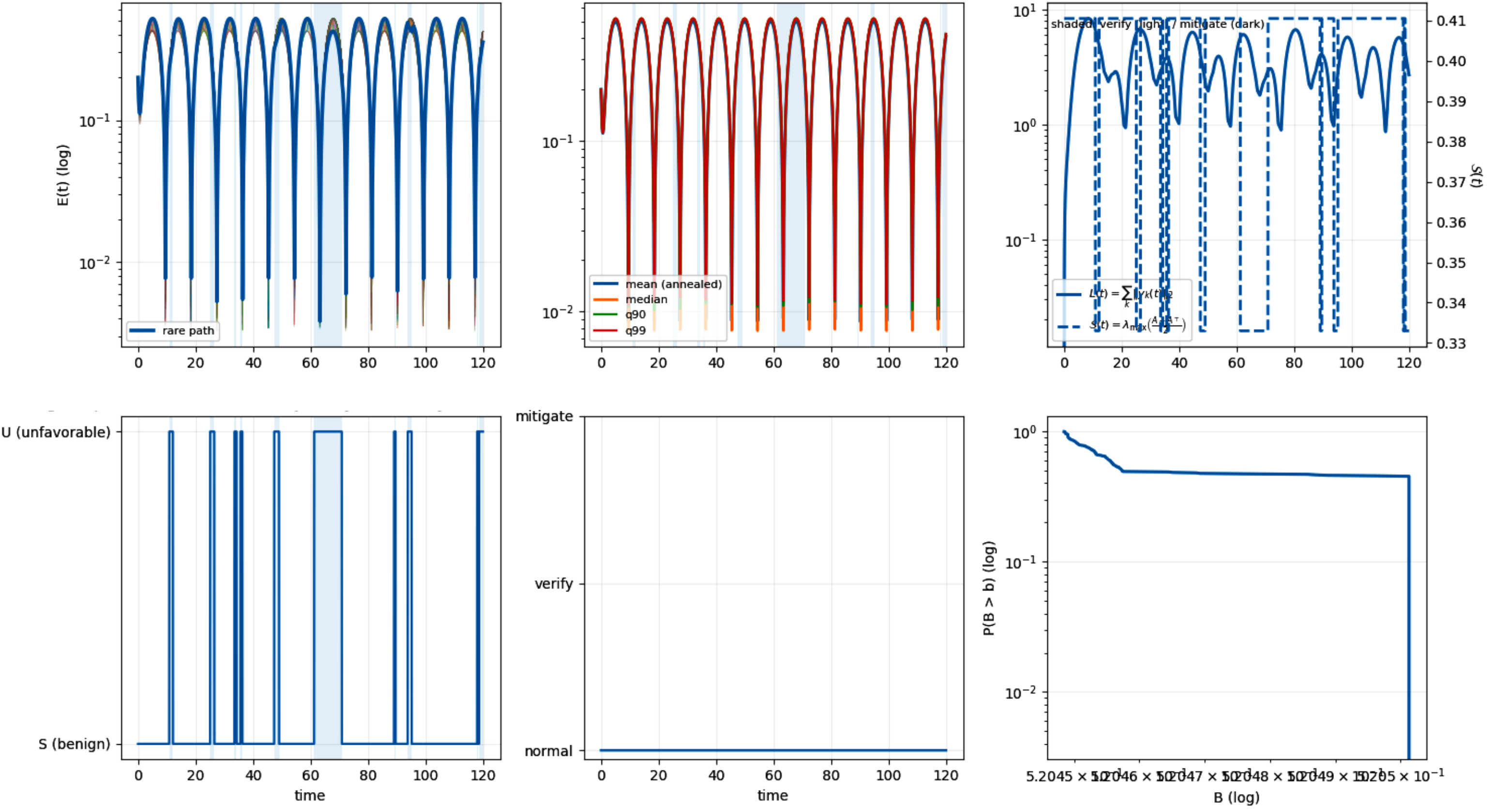}
	\caption{
		SAFE-in-U baseline, where the unfavorable regime is permanently replaced by a strongly damped operator.
		\textbf{Top-left:} Quenched trajectories of $E(t)$ show no amplification.
		\textbf{Top-middle:} Annealed statistics remain bounded and periodic.
		\textbf{Top-right:} Diagnostic indicators remain below critical thresholds, reflecting structural over-damping.
		\textbf{Bottom-left:} Regime path $z(t)$, shown for reference.
		\textbf{Bottom-middle:} DDDAS action mode (inactive).
		\textbf{Bottom-right:} CCDF of burst size $B$, indicating strong tail suppression achieved by modifying regime dynamics.
	}
	\label{fig:panel_safe}
\end{figure}

Figure~\ref{fig:ccdf_comparison} compares the empirical CCDFs of $B$ across all
scenarios. In the uncontrolled memory-on case, the CCDF exhibits a clear
power-law regime over an intermediate scale range, as predicted by the theory.
Under DDDAS control, the empirical CCDF shifts downward and typically steepens;
in the strongly contractive mitigation configuration, the tail exhibits visible
finite-horizon truncation. These outcomes match the mathematical predictions:
DDDAS reduces the effective unfavorable growth rate (or enforces contraction-on-demand),
thereby increasing the effective tail exponent or truncating tails on $[0,T]$.

\begin{figure}[H]
	\centering
	\includegraphics[width=0.75\textwidth]{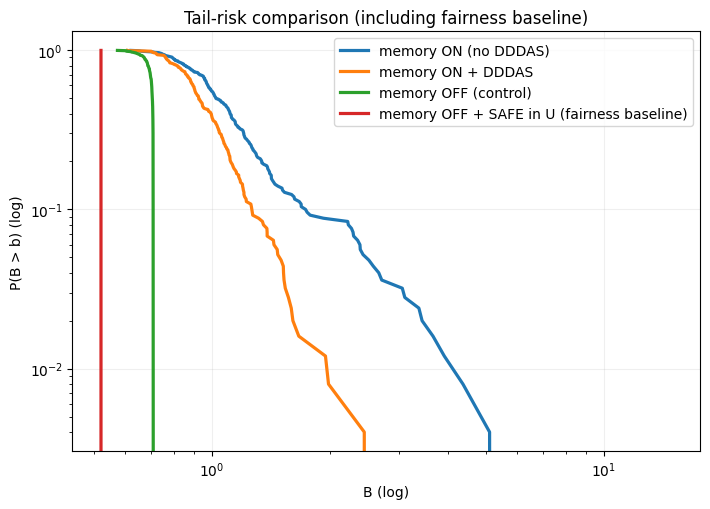}
	\caption{
		Comparison of burst-size tail distributions across control strategies.
		Shown are the CCDFs of $B=\max_t E(t)$ for memory ON without DDDAS, memory ON with DDDAS, memory OFF, and the SAFE-in-U baseline.
		The DDDAS-controlled system achieves tail-risk reduction comparable to the SAFE-in-U baseline while preserving the original regime statistics, outperforming memory-free dynamics.
	}
	\label{fig:ccdf_comparison}
\end{figure}

\paragraph{Empirical tail indices}
The slope of the log--log CCDF provides an empirical estimate $\widehat{\alpha}$
of the tail index on the range $b\ge b_{\min}$, with $b_{\min}$ chosen via a high-quantile
rule and a stability check (Appendix~\ref{app:ccdf_fitting}).
We compare $\widehat{\alpha}$ with the theoretical prediction $\lambda_U/\gamma_U$
from Subsection~\ref{subsec:quenched_tails}, where $\gamma_U$ is estimated from
uncontrolled unfavorable intervals using the operator susceptibility and dwell-level growth diagnostics described in
Appendix~\ref{app:gamma_estimation}. Under DDDAS control, the effective growth rate
in $U$ is reduced, yielding a steeper empirical slope or finite-horizon truncation,
consistent with the theory in Subsection~\ref{subsec:dddas_policy}.

\paragraph{Summary}
Collectively, these experiments establish: (i) long memory and regime persistence
generate heavy-tailed burst statistics despite annealed boundedness; (ii) the lifted
indicators $(L(t),\mathcal{S}(t))$ provide reliable early warning; (iii) DDDAS
intervention truncates or steepens tails without altering the exogenous regime
statistics; and (iv) memory, when properly monitored, is a diagnostic asset rather
than a liability.

\section{Discussion and Conclusions}
\label{sec:discussion}

This paper isolates a structural mechanism for rare, high-impact transient events in
linear networked systems operating under intermittent adverse conditions and long memory.
Working entirely within a finite-dimensional lifted switched ODE representation,
we exhibit a sharp separation between \emph{annealed boundedness}
(mean stability on finite horizons)
and \emph{quenched tail risk}
(trajectory-level heavy-tailed bursts).
The analysis identifies when heavy tails are not anomalies but emergent consequences of
(i) regime persistence,
(ii) memory-induced amplification geometry,
and (iii) alignment effects in the lifted state space.

\subsection*{Relation to existing heavy-tail and switching-system theories}

A natural referee objection is that power-law tails under switching are well known in the
Kesten--Goldie paradigm and in Markov-modulated linear systems.
Indeed, heavy tails have been rigorously established for discrete-time stochastic
recursions and continuous-time diffusions under regime switching
\cite{kesten1973,goldie1991,buraczewski2016,desaporta2005,deSaporta2004tails}.

The contribution of the present work is not the existence of heavy tails per se,
but the identification of a \emph{deterministic, memory-driven, operator-geometric mechanism}
for quenched amplification in a forced continuous-time system where randomness enters
\emph{only} through the regime path.

\paragraph{Continuous-time, forced, memory-induced mechanism}
Unlike classical multiplicative-noise models or random matrix products
\cite{kesten1973,goldie1991,buraczewski2016},
the present system is linear, deterministic between regime switches,
and driven by bounded forcing.
Heavy tails arise because long memory, embedded through a Volterra-to-ODE lifting,
creates effective growth channels in the lifted operator during unfavorable dwells.
This places the mechanism closer in spirit to Markov-modulated linear dynamics
\cite{deSaporta2004tails,desaporta2005},
but with amplification generated endogenously by memory geometry rather than by
instability of instantaneous dynamics.

\paragraph{Operator-theoretic tail exponent and measurable diagnostics}
A central distinction is that the tail exponent $\lambda_U/\gamma_U$
is not a fitted phenomenological parameter.
The dwell-time rate $\lambda_U$ is a property of the exogenous regime process,
while $\gamma_U$ is an operator-theoretic quantity given by the Euclidean matrix measure
(logarithmic norm) of the lifted operator.
This directly connects rigorous tail bounds to measurable diagnostics,
bridging abstract heavy-tail theory
\cite{deSaporta2004tails,buraczewski2016}
with nonmodal growth analysis
\cite{trefethen1993,schmid2007,embree2005}.

\paragraph{Tail shaping without altering regime statistics}
Most existing approaches to tail mitigation rely either on modifying the switching process
or on static over-damping.
In contrast, the proposed DDDAS policy leaves the regime process untouched
and instead performs \emph{contraction on demand} by switching operators
only when the system enters a dynamically dangerous region of lifted state space.
This distinguishes the approach from classical stabilization
\cite{mao2007stochastic,yuan2010stability}
and aligns it with the DDDAS philosophy of adaptive, state-aware intervention
\cite{darema2004dddas,darema2009}.

\subsection*{Lifted-operator geometry as the organizing principle}

A second likely objection is that the susceptibility indicator $\mathcal S(t)$
is merely a reformulation of classical non-normal transient-growth diagnostics.
Indeed, symmetric-part growth rates and logarithmic norms are standard tools
in nonmodal stability theory
\cite{trefethen1993,schmid2007,embree2005}.

The key point here is that in systems with memory,
the relevant geometry does not live in the original network state space,
but in the lifted space induced by the Volterra embedding.
Even when each instantaneous network operator $B_z$ is stable,
the lifted operator $A_z^{(0)}$ can admit positive susceptibility windows
due to accumulated memory states.
Thus, the appropriate early-warning object is not the spectrum of $B_z$,
but the geometry of the lifted operator family $\{A_z^{(m)}\}$.

\subsection*{Heavy tails as a balance of time scales and geometry}

The power-law lower bound derived in Section~\ref{sec:math_results}
formalizes a simple but structurally important message:
quenched extremes are governed by a competition between
the unfavorable dwell-time tail
$\tau^{(U)}\sim\mathrm{Exp}(\lambda_U)$
and the effective lifted amplification rate $\gamma_U$.
This mirrors the balance between persistence and growth
characteristic of Markov-modulated systems
\cite{deSaporta2004tails,desaporta2005},
but with $\gamma_U$ emerging from memory-induced operator geometry rather than
from eigenvalue instability.

\subsection*{DDDAS as contraction-on-demand tail shaping}

The control contribution should not be interpreted as a classical stabilization result.
Instead, DDDAS acts as a \emph{tail-shaping} mechanism:
it selectively disrupts rare amplification pathways by enforcing temporary contraction
only when indicators $(L(t),\mathcal S(t))$ signal elevated tail risk.
This allows the policy to reduce extreme-event probability
while leaving annealed statistics largely unchanged,
a behavior that static control or memory elimination cannot replicate.

\subsection*{Assumptions, empirical verifiability, and robustness}

\paragraph{Cone condition and empirical testability}
The cone (alignment) condition used in the heavy-tail lower bound
encodes persistence of an amplifying direction along rare burst-producing segments.
Rather than being purely technical, it is empirically testable:
the alignment ratio
$a(t)=(v_U^\top X(t))/\|X(t)\|_2$
can be monitored and summarized along unfavorable dwells.
This makes the heavy-tail mechanism falsifiable in simulations and, in principle,
in data-driven applications.

\paragraph{Finite-horizon focus}
All results are finite-horizon by design.
This matches operational risk assessment, where the relevant observable is the burst
$B_T=\sup_{t\le T}\|x(t)\|_2$.
The theory explicitly identifies when finite-horizon truncation enters,
and the numerical results illustrate the resulting curvature or cutoff in CCDFs.

\paragraph{SOE lifting and approximation error}
The sum-of-exponentials (SOE) approximation introduces a numerical modeling layer.
By enforcing positivity via NNLS and monitoring kernel approximation error,
we ensure that the lifted dynamics faithfully represent completely monotone memory kernels,
consistent with approximation theory
\cite{beylkinMonzon2005expSums,gripenberg1990volterra,pruss1993evolution}.

\subsection*{Outlook}

Natural extensions include multiple unfavorable regimes,
interacting growth channels, and nonlinear generalizations
where memory states couple to state-dependent switching.
From an applied perspective, joint estimation of
$\lambda_U$ and $\gamma_U$ from data streams
offers a principled route to real-time tail-risk indicators
and adaptive control grounded in operator geometry.

\paragraph{Conclusion}
Systems with memory can be \emph{stable on average} yet \emph{risky in the tails}.
By exposing a concrete lifted-geometry mechanism for quenched amplification,
and by providing computable diagnostics together with a provably tail-reducing
contraction-on-demand policy,
this work contributes a physically interpretable and operationally actionable
framework for tail risk in regime-switching systems with memory.

\appendix

\section{Computational and Numerical Details}
\label{app:computational}

This appendix provides implementation-level details supporting the numerical
experiments in Section~\ref{sec:numerics}. The emphasis is on reproducibility and on
a transparent bridge between the empirical measurements and the theoretical quantities
introduced in Section~\ref{sec:math_results}, in particular Subsection~\ref{subsec:quenched_tails}:
(i) the SOE-based lifted embedding and its numerical construction,
(ii) empirical estimation of the effective unfavorable growth rate $\gamma_U$
(or its cone-restricted counterpart when alignment restrictions are enforced), and
(iii) fitting of burst-size tails via CCDF slopes and comparison with the predicted
tail index $\lambda_U/\gamma_U$.

\subsection{SOE approximation and lifted embedding}
\label{app:soe}

\paragraph{Target kernel class}
In each regime, the memory kernel is assumed completely monotone on $[0,\infty)$ and hence
admits the Bernstein representation
\begin{equation}\label{eq:bernstein_app}
	g(t)=\int_0^\infty e^{-rt}\,\mu(dr),
\end{equation}
for a positive measure $\mu$. On the finite horizon $[0,T]$, we approximate the integral by a finite positive SOE quadrature
\begin{equation}\label{eq:soe_app}
	g(t)\approx g^{(K)}(t):=\sum_{k=1}^K w_k e^{-r_k t},
	\qquad w_k>0,\ r_k>0.
\end{equation}

\paragraph{Choice of nodes}
The decay rates $\{r_k\}$ are selected on a logarithmic grid spanning relevant memory time scales:
\begin{equation}\label{eq:rk_grid_app}
	r_k=r_{\min}\left(\frac{r_{\max}}{r_{\min}}\right)^{\frac{k-1}{K-1}},
	\qquad k=1,\dots,K,
\end{equation}
with $r_{\min}\approx T^{-1}$ and $r_{\max}$ chosen so that $e^{-r_{\max}\Delta t}$ is numerically negligible at the solver time step
$\Delta t$.

\paragraph{Positive weights via constrained fitting (NNLS)}
Weights $\{w_k\}$ are obtained by nonnegative least squares fitting on a design grid $\{t_j\}_{j=1}^J\subset[0,T]$:
\begin{equation}\label{eq:nnls_app}
	\min_{w_k\ge 0}\ \sum_{j=1}^J\Big(g(t_j)-\sum_{k=1}^K w_k e^{-r_k t_j}\Big)^2.
\end{equation}
Positivity preserves complete monotonicity at the discrete level and prevents spurious sign-changing memory gains in the lifted dynamics.

\paragraph{Lifted ODE used in simulations}
With the SOE approximation, the Volterra term admits the standard Markovian embedding
\begin{align}\label{eq:lifted_app_system}
	\dot{x}(t) &= B_{z(t)}x(t)+\sum_{k=1}^K w_k y_k(t)+f(t),\\
	\dot{y}_k(t) &= x(t)-r_k y_k(t),\qquad k=1,\dots,K.
\end{align}
This matches the normal-mode lifted structure used throughout the manuscript (cf.\ \eqref{eq:lifted_structure}).

\paragraph{Integration, event handling, and accuracy checks}
We integrate \eqref{eq:lifted_app_system} with an explicit adaptive solver (e.g.\ RK45) using event handling for CTMC jump times
(and, when DDDAS is enabled, for mode switching times). SOE accuracy is monitored by the relative kernel error on $[0,T]$,
\begin{equation}\label{eq:soe_error_metric}
	\varepsilon_{\mathrm{rel}}
	:=
	\sup_{t\in[0,T]}\frac{|g(t)-g^{(K)}(t)|}{\max\{1,|g(t)|\}},
\end{equation}
and by convergence of key observables (e.g.\ fitted CCDF slopes) under refinement of $K$ and the time-scale coverage
(Appendix~\ref{app:sens_soe}).

\subsection{Estimating the unfavorable effective growth rate $\gamma_U$}
\label{app:gamma_estimation}

The tail mechanism in Subsection~\ref{subsec:quenched_tails} depends on a competition between
(i) unfavorable dwell-time persistence $\tau^{(U)}\sim\mathrm{Exp}(\lambda_U)$ and
(ii) an effective lifted growth rate $\gamma_U>0$ during $U$-dwells. In the lifted ODE formulation, $\gamma_U$ is naturally linked to the
Euclidean matrix measure (logarithmic norm), i.e.\ to the susceptibility indicator $\mathcal{S}(t)$.

We report three complementary estimators, from operator-based to trajectory-based.

\paragraph{(A) Operator-based estimator via the matrix measure}
When $z(t)=U$ and the mode is normal ($m(t)=0$), the susceptibility is constant and equals
\begin{equation}\label{eq:gamma_operator_app}
	\gamma_{U,\mathrm{op}}
	:=
	\mu_2(A_U^{(0)})
	=
	\lambda_{\max}\!\left(\frac{A_U^{(0)}+(A_U^{(0)})^\top}{2}\right).
\end{equation}
This quantity is computed directly from the lifted matrix $A_U^{(0)}$.

\paragraph{(B) Cone-restricted effective rate (alignment diagnostic)}
If a cone/alignment restriction is used in the heavy-tail argument (as discussed in Subsection~\ref{subsec:quenched_tails}),
let $v_U$ denote a unit eigenvector attaining \eqref{eq:gamma_operator_app} and define the alignment ratio
\begin{equation}\label{eq:alignment_ratio_app}
	a(t):=\frac{v_U^\top X(t)}{\|X(t)\|_2}\in[-1,1].
\end{equation}
For a chosen level $\alpha\in(0,1]$, restrict to times where $z(t)=U$, $m(t)=0$, and $a(t)\ge \alpha$, and estimate
\begin{equation}\label{eq:gamma_cone_app}
	\widehat{\gamma}_{U,\mathrm{cone}}
	:=
	\mathrm{Quantile}_{q}\left(
	\frac{1}{\Delta}\log\frac{\|X(t+\Delta)\|_2}{\|X(t)\|_2}
	\ \Big|\ z(t)=U,\ m(t)=0,\ a(t)\ge \alpha
	\right),
\end{equation}
where $\Delta$ is a short window and $q$ is taken high (e.g.\ $q=0.9$), reflecting that extreme bursts are driven by the most amplifying windows.

\paragraph{(C) Dwell-level estimator}
For each realization, identify maximal intervals $[t_0,t_1]$ such that $z(t)=U$ and $m(t)=0$ for all $t\in[t_0,t_1]$, and compute
\begin{equation}\label{eq:gamma_dwell_app}
	\widehat{\gamma}_{U,\mathrm{dwell}}
	=
	\frac{1}{t_1-t_0}
	\log\!\left(\frac{\|X(t_1)\|_2}{\|X(t_0)\|_2}\right).
\end{equation}
Only intervals with $t_1-t_0\ge \Delta_{\min}$ are retained. We then summarize $\gamma_U$ by a high quantile of the empirical distribution.

\paragraph{Forcing projection diagnostic}
When a forcing projection assumption is invoked in theory, we validate it numerically by monitoring $v_U^\top \widetilde f(t)$
(or its discrete-time minimum over the sampled grid) to ensure it remains positive or at least non-negligible on the reported horizons.

\subsection{CCDF fitting, tail-index estimation, and theory comparison}
\label{app:ccdf_fitting}

\paragraph{Burst observable and empirical CCDF}
For each trajectory we compute
\[
B_T=\max_{t\in[0,T]}\|x(t)\|_2,
\]
and estimate the empirical CCDF $\widehat{\mathbb{P}}(B_T>b)$ from an ensemble of $N$ independent realizations.

\paragraph{Choice of tail threshold}
A lower cutoff $b_{\min}$ is selected using either:
(i) a fixed upper-quantile rule (e.g.\ the $0.9$-quantile of $\{B_T\}$), or
(ii) a stability rule: choose the smallest $b_{\min}$ for which the fitted log--log slope is stable under small perturbations of $b_{\min}$.

\paragraph{Log--log slope fit and uncertainty quantification}
On $b\ge b_{\min}$ we fit the standard power-law surrogate
\begin{equation}\label{eq:ccdf_slope_fit_app}
	\log \widehat{\mathbb{P}}(B_T>b)\approx -\widehat{\alpha}\log b + \widehat{c}
\end{equation}
by least squares. Confidence intervals are obtained via bootstrap resampling of trajectories with replacement.

\paragraph{Theoretical comparison}
We compare $\widehat{\alpha}$ against
\begin{equation}\label{eq:tail_index_compare_app}
	\alpha_{\mathrm{th}}=\frac{\lambda_U}{\gamma_U},
\end{equation}
where $\gamma_U$ is taken as $\gamma_{U,\mathrm{op}}$, $\widehat{\gamma}_{U,\mathrm{cone}}$, or a conservative hybrid
(e.g.\ $\gamma_U=\min\{\gamma_{U,\mathrm{op}},\widehat{\gamma}_{U,\mathrm{cone}}\}$).
Finite-horizon effects emerge when $(1/\gamma_U)\log b$ approaches the horizon scale used in the theoretical lower bound; in that case the CCDF
may exhibit curvature or truncation and the fitted slope becomes scale-dependent.

\subsection{Reporting protocol}
\label{app:reporting}

For each experiment, we report:
\begin{itemize}
	\item \textbf{Regime process:} $(\lambda_{SU},\lambda_{US})$ and the induced $\lambda_U$.
	\item \textbf{Network operator:} $(n,\rho(W))$, coupling parameters $(\gamma_z,\beta_z)$, and the forced node(s).
	\item \textbf{Memory discretization:} $(K,r_{\min},r_{\max})$ and the kernel error metric $\varepsilon_{\mathrm{rel}}$.
	\item \textbf{Growth rates:} $\gamma_{U,\mathrm{op}}$, dwell-based quantiles of $\widehat{\gamma}_{U,\mathrm{dwell}}$,
	and (if used) $\widehat{\gamma}_{U,\mathrm{cone}}$ with the chosen cone level $\alpha$.
	\item \textbf{Tail statistics:} fitted slope $\widehat{\alpha}$ with CIs, and comparison with $\alpha_{\mathrm{th}}=\lambda_U/\gamma_U$.
\end{itemize}
This protocol suffices to reproduce the figures in Section~\ref{sec:numerics} and to validate the reported scaling.

\section{Sensitivity Analysis}
\label{app:sensitivity}

This appendix assesses robustness with respect to modeling, numerical, and control parameters. We distinguish \emph{structural} effects
(changing the presence/absence of heavy tails) from \emph{quantitative} effects (changing prefactors or truncation scales).

\subsection{Sensitivity to SOE resolution and kernel time-scale coverage}
\label{app:sens_soe}

We repeat experiments for increasing $K$ and for varied $(r_{\min},r_{\max})$ in \eqref{eq:rk_grid_app}, keeping the regime process and network topology fixed.
We report: (i) convergence of $\varepsilon_{\mathrm{rel}}$ in \eqref{eq:soe_error_metric}, (ii) stability of $\gamma_{U,\mathrm{op}}$ and of the distribution of
$\widehat{\gamma}_{U,\mathrm{dwell}}$, and (iii) stability of the fitted CCDF slope $\widehat{\alpha}$.
Empirically, once dominant memory time scales are resolved, $\widehat{\alpha}$ varies only within statistical uncertainty.

\subsection{Sensitivity to regime statistics and dwell-time persistence}
\label{app:sens_regime}

We vary $(\lambda_{SU},\lambda_{US})$ while keeping all other parameters fixed.
The theory predicts $\alpha_{\mathrm{th}}=\lambda_U/\gamma_U$, hence $\widehat{\alpha}$ should increase approximately linearly with $\lambda_U$ provided finite-horizon truncation is negligible.
We report: (i) fitted $\widehat{\alpha}$ versus $\lambda_U$, (ii) stability of $\gamma_U$ estimates across runs, and (iii) annealed statistics to confirm that changes affect tails rather than typical behavior.

\subsection{Sensitivity to network topology, non-normality, and coupling strength}
\label{app:sens_network}

We compare directed networks with similar spectral radius $\rho(W)$ but different non-normality levels, while maintaining regime-wise stability
$\gamma_z>\beta_z\rho(W)$.
Heavy tails correlate with positive susceptibility $\gamma_{U,\mathrm{op}}>0$ (or elevated $\mathcal{S}(t)$ during $U$ windows), rather than with eigenvalue stability alone.

\subsection{Sensitivity to forcing amplitude, frequency, and projection}
\label{app:sens_forcing}

We vary forcing amplitude $A$ and frequency $\omega$ in $f(t)=A\sin(\omega t)e_i$.
Consistent with the mechanism in Section~\ref{sec:math_results}, the tail \emph{exponent} is controlled primarily by $\lambda_U$ and $\gamma_U$, while forcing affects prefactors and truncation scales.
We additionally report forcing projection diagnostics when a projection hypothesis is used in the theoretical argument.

\subsection{Sensitivity to DDDAS thresholds, hysteresis, and mitigation strength}
\label{app:sens_dddas}

We vary threshold levels $(\tau_L^{(1)},\tau_L^{(2)},\tau_S^{(1)},\tau_S^{(2)})$ and the minimum dwell time $\Delta_{\min}$.
We report: (i) intervention rate, (ii) change in annealed statistics (distortion of typical behavior), and (iii) change in tail behavior (slope $\widehat{\alpha}$ or evidence of truncation).
Moderate thresholds reduce the effective unfavorable growth rate from $\gamma_U$ to $\gamma_U^{\mathrm{eff}}<\gamma_U$, yielding predictable tail improvement while preserving regime statistics.

\subsection{Summary of robustness}
\label{app:sens_summary}

Overall, the sensitivity analysis supports three conclusions:
(i) quenched heavy tails are structurally induced by the interaction of memory and dwell-time persistence,
(ii) the predicted tail index $\lambda_U/\gamma_U$ is stable under numerical refinement once memory time scales are resolved, and
(iii) DDDAS mitigation improves tail behavior robustly across a broad parameter range while largely preserving annealed statistics.

\section{Implementation Pipeline (NNLS, dwell detection, CCDF slope fit)}
\label{app:implementation}

This appendix section summarizes the practical implementation pipeline used to generate the numerical results:
SOE fitting via NNLS, detection of uncontrolled unfavorable dwells, estimation of $\gamma_U$, selection of $b_{\min}$, and CCDF slope fitting with bootstrap uncertainty quantification.

\begin{algorithm}[H]
	\caption{Implementation pipeline for SOE fitting, dwell detection, and CCDF tail-index estimation}
	\label{alg:implementation}
	\small
	
	\KwIn{
		Kernel samples $\{(t_j,g(t_j))\}_{j=1}^J$; log-grid $\{r_k\}_{k=1}^K$; horizon $T$; Monte Carlo size $N$;
		CTMC rates $(\lambda_{SU},\lambda_{US})$; minimum dwell $\Delta_{\min}$; bootstrap size $B$; quantile levels $q_\gamma,q_b$
	}
	\KwOut{
		NNLS SOE weights $\{w_k\}$; effective growth $\widehat{\gamma}_U$; tail index $\widehat{\alpha}$ with CI; cutoff $b_{\min}$; theoretical index $\alpha_{\mathrm{th}}$
	}
	
	\BlankLine
	\textbf{(SOE by NNLS)} Construct $G\in\mathbb{R}^{J\times K}$ with $G_{jk}\gets e^{-r_k t_j}$ and solve
	$w^\star \in \arg\min_{w\ge 0}\sum_{j=1}^J\Big(g(t_j)-\sum_{k=1}^K G_{jk}w_k\Big)^2$; set $w_k\gets w_k^\star$\;
	
	\BlankLine
	Initialize lists $\mathcal{B}\gets\emptyset$ (bursts), $\mathcal{G}\gets\emptyset$ (dwell growth rates)\;
	\For{$n\gets 1$ \KwTo $N$}{
		Sample CTMC path $z^{(n)}(t)$ on $[0,T]$; generate mode $m^{(n)}(t)$; integrate \eqref{eq:sim_base_lifted}\;
		$B_T^{(n)}\gets \max_{t\in[0,T]}\|x^{(n)}(t)\|_2$; append $B_T^{(n)}$ to $\mathcal{B}$\;
		Extract maximal intervals $\{[t_0,t_1]\}$ where $z^{(n)}(t)=U$ and $m^{(n)}(t)=0$\;
		\ForEach{$[t_0,t_1]$}{
			\If{$t_1-t_0\ge \Delta_{\min}$}{
				$\widehat{\gamma}\gets \frac{1}{t_1-t_0}\log\!\big(\|X^{(n)}(t_1)\|_2/\|X^{(n)}(t_0)\|_2\big)$\;
				append $\widehat{\gamma}$ to $\mathcal{G}$\;
			}
		}
	}
	
	\BlankLine
	$\widehat{\gamma}_U \gets \mathrm{Quantile}_{q_\gamma}(\mathcal{G})$\;
	$b_{\min}\gets \mathrm{Quantile}_{q_b}(\mathcal{B})$; refine $b_{\min}$ until fitted log--log slope over $b\ge b_{\min}$ is stable\;
	Fit $\log \widehat{\mathbb{P}}(B_T>b)=c-\widehat{\alpha}\log b$ by least squares on $b\ge b_{\min}$\;
	
	\BlankLine
	Initialize $\mathcal{A}\gets\emptyset$\;
	\For{$b\gets 1$ \KwTo $B$}{
		Resample $\mathcal{B}^{(b)}$ from $\mathcal{B}$ with replacement; refit $\widehat{\alpha}^{(b)}$; append to $\mathcal{A}$\;
	}
	Return CI as quantiles of $\mathcal{A}$ and compute $\alpha_{\mathrm{th}}=\lambda_U/\widehat{\gamma}_U$\;
	
\end{algorithm}

\pagebreak

\section*{Declarations}

\paragraph{Funding}
This research did not receive any specific grant from funding agencies in the public,
commercial, or not-for-profit sectors.

\paragraph{Conflicts of interest}
The author declares that there are no known competing financial interests or personal
relationships that could have appeared to influence the work reported in this paper.

\paragraph{Data availability}
No experimental data were generated or analyzed in this study. Numerical simulations
were performed using synthetic data generated within the model described in the
manuscript.

\paragraph{Code availability}
Code used to generate the numerical results is available from the author upon
reasonable request.

\paragraph{Ethics approval and consent to participate}
Not applicable.

\paragraph{Author contributions}
The author is solely responsible for the conception of the study, the mathematical
analysis, the numerical experiments, and the writing of the manuscript.

\pagebreak

	\bibliographystyle{plain}
	\bibliography{memory_switching_dddas}
	
\end{document}